\newtheorem{theorem}{Theorem}
\newtheorem{corollary}{Corollary}
\newtheorem{lemma}{Lemma}
\newcommand{\abs}[1]{\left|{#1}\right|}
\title{\Large Probabilistic Coherence Transformation under Strictly Incoherent Operation}
\author[1]{Ao-Xiang Liu}
\author[1,2]{Cong-Feng Qiao\thanks{Corresponding author: \href{mailto:qiaocf@ucas.ac.cn}{qiaocf@ucas.ac.cn}}}
\affil[1]{School of Physical Sciences, University of Chinese Academy of Sciences, Beijing 100049, China}
\affil[2]{Key Laboratory of Vacuum Physics, University of Chinese Academy of Sciences, Beijing 100049, China}
\date{\normalsize Dated: \today}
\begin{document}

\baselineskip24pt
\maketitle

\begin{abstract}
\normalsize
We investigate the probabilistic coherence transformation under strictly incoherent operations on majorization lattice. To this end, the greedy and thrifty protocols are adapted for the probabilistic coherence transformation, of which the latter exhibits certain superiority in preserving coherence on average. We provide new insight into optimal common resource state by showing its natural interpretations in terms of distance measure, and formalize the intuition that the large coherence gain can be realized with the price of success probability loss for coherence transformation, and vice versa. Collectively, these results justify the optimality of the two protocols. Additionally, deterministic and probabilistic coherence transformations between two mixed states are explored. As an application, it is shown that the conversion from coherence into entanglement may benefit from probabilistic coherence transformation.
\end{abstract}

\section{Introduction}
\label{sec:intro}
Quantum coherence is one of the fundamental features of quantum theory and a valuable resource in quantum information processing \cite{MA10Q}, such as quantum computation \cite{SP97P,GL97Q}, quantum metrology \cite{GV04Q,GV11A,PD18C,XB18S,ZC19D,LR24R}, quantum channel discrimination \cite{GD13C,FA14D,SA17C,TA19O,MW20T,CS22Q}, quantum correlations \cite{MJ16C,HX16E,HX16Q,MD17N,GD17W,DZ19E,LJ21C,LK23C,MC23A}, quantum biology and transport phenomena \cite{LN13Q}, quantum phase transitions \cite{KG14Q,CB15F,MA16Q,CJ16C,LY16Q,SZ24E}, quantum cryptography \cite{BC14Q}, and quantum chaos \cite{AN21Q}. Recent years, the resource theory of quantum coherence has seen significant development. It not only provides a rigorous framework for quantifying coherence, but also offers a new perspective for understanding quantum coherence. 

For any resource theory of coherence, one of the main problems is how to describe the transformation between states via free operations \cite{EC19Q}. There are different types of definitions of free operations by specified physical considerations \cite{SA17C}, such as maximally incoherent operations \cite{AJ06Q}, incoherent operations (IOs) \cite{BT14Q,YY15Q},  strictly incoherent operations (SIOs) \cite{WA16O,YB16Q}, dephasing-covariant incoherent operations \cite{CE16C,CE16Critical,MI16H}, and genuinely incoherent operations \cite{DV17G}. The quantum states that are diagonal in a prefixed reference basis are refered as incoherent states \cite{EC19Q}. The notion of SIOs was introduced by $\mathring{\text{A}}$berg \cite{AJ06Q}, which has no business with coherence but a physical interpretation in terms of interferometry \cite{YB16Q}. The coherence transformation (CT) asks for free operation transforming one quantum state into another. This issue was solved in one-qubit case and for arbitrary pure state to pure state \cite{DS15C,DS17E,CE16C,DS15Co,ZH17O} (see \cref{prop:cohe_tran}) in terms of the majorization relation \cite{MA11I}. The necessary and sufficient conditions for transforming mixed state into pure state \cite{LC19D} and pure state to mixed state \cite{DS19C} via SIOs were accomplished. Nevertheless, the CT between two arbitrary mixed states remains unsolved generally, though some efforts have been done \cite{DS19C,BG21G,LC22A,BZ24S}. To circumvent this difficulty, one may achieve the CT with the aid of catalyst state \cite{LB24C}. Additional progress in quantum coherence see Refs. \cite{ZQ16D,RC19B,QW22C,SY23C,PJ24V,ZH24N} and the references therein. Among various coherence transformation protocols, the probabilistic coherence transformation (PCT) between states is also meaningful in coherence resource theory. In \cite{DS15Co}, an optimal coherence transformation strategy between two pure coherent states is presented following the approach that has been established for entanglement in \cite{VG99E}. Recently, the optimal probabilistic coherence distillation protocols are proposed in \cite{TG19O,LC21O,LC20C} and a framework of approximate coherence transformation are investigated in \cite{LC22A}. Usually, people employ the deterministic CT from initial state to an intermediate state, which may lead to large waste of coherence when PCT falis. 

In this work, we investigate the probabilistic coherence transformation between states under SIO.
In the framework of majorization lattice \cite{FC02S}, the greedy and thrifty protocols \cite{DS24P} were employed and proved to be optimal  for PCT, of which the thrifty protocol preserves more coherence on average. The paper is organized as follows. In \cref{sec:preliminaries}, the basic notions are introduced and the intuitive trade-off relation between the coherence gain and the maximal CT probability is proved. In \cref{sec:protocols}, the greedy and thrifty protocols of PCT for pure states are constructed, and they are compared based on majorization lattice and fidelity. In \cref{sec:protranmix}, we generalize the protocols to the case of mixed state. The probabilistic conversion from coherence into entanglement through probabilistic coherence transformation is studied in \cref{sec:coheent}. The conclusions are given in \cref{sec:con}.
%=================================================================================
\section{Preliminaries}
\label{sec:preliminaries}
In a prefixed basis of the $d$-dimensional Hilbert space, a state is incoherent (free) state if it is diagonal in the basis, otherwise it is coherent state. The set of incoherent states will be denoted by $\mathcal{I}$. A strictly incoherent operation is a completely positive trace preserving map  \cite{BT14Q,YY15Q,YB16Q,WA16O}, that is 
\begin{align}
\Lambda(\rho)=\sum_{\mu=1}^{u}K_\mu\rho K_\mu^\dagger\ .
\end{align}
Here, the Kraus operator $K_\mu$, the so called strictly incoherent operator, obeys the completeness condition, $\sum_{\mu=1}^{u}K_\mu^\dagger K_\mu=\mathbb{I}$ with $\mathbb{I}$ the identity operator on the Hilbert space, and satisfying $\sum_{\mu=1}^{u}K_\mu\mathcal{I} K_\mu^\dagger\subset\mathcal{I}$, \ $\sum_{\mu=1}^{u}K_\mu^\dagger\mathcal{I} K_\mu\subset\mathcal{I}$. It can be observed that, in each column or row of $K_\mu$, there exists at most one nonvanishing element \cite{YY15Q,DS15C}. One can show that the projector $P_\mu=\sum_i \ket{i}\bra{i}$ is incoherent, which can be denoted as a strictly incoherent projector \cite{LC22A}.

Majorization theory \cite{MA11I} is found a useful tool in the study of CT under incoherent operations, and is widely involved in the quantum information theory \cite{MA99C,MA01C,MA01M,PM11M,FS13U,PZ13M,BG17A,GG18C,LJ19T,BG19O,BG19M,LJ20A,LJ21C,YB16Q,ZG23E,XY23Q,LA23A}. 
Given two distributions $\boldsymbol{a}$ and $\boldsymbol{b}\in\mathcal{P}^{d}$, the $d$-dimensional probability distribution, $\boldsymbol{a}$ is said to be majorized by $\boldsymbol{b}$, in symbols $\boldsymbol{a}\prec\boldsymbol{b}$, if and only if $\sum_{i=1}^{k}a^\downarrow_{i}\leqslant\sum_{j=1}^{k}b^\downarrow_{j},\ k\in\lbrace1,...,d\rbrace$, where $\downarrow$ indicates that $\boldsymbol{a}$ and $\boldsymbol{b}$ are descending ordered. If neither $\boldsymbol{a}\nprec\boldsymbol{b}$ nor $\boldsymbol{b}\nprec\boldsymbol{a}$, then $\boldsymbol{a}$ and $\boldsymbol{b}$ are incomparable. The majorization relation equipped with join ($\vee$) and meet ($\wedge$) operations constitutes a majorization lattice \cite{FC02S}, in which $\boldsymbol{a}\vee\boldsymbol{b}\in\mathcal{P}^d$ and $\boldsymbol{a}\wedge\boldsymbol{b}\in\mathcal{P}^d$ are the unique least upper bound and greatest lower bound of $\boldsymbol{a}$, $\boldsymbol{b}$ up to a permutation transformation. Similarly, we write the least upper and greatest lower bounds of set $\mathcal{S}\subset\mathcal{P}^d$ as $\bigvee\mathcal{S}$ and $\bigwedge\mathcal{S}$, respectively. 
Specifically,  $\boldsymbol{g}=\bigwedge\mathcal{S}\subset\mathcal{P}^d$ is defined as \cite{FC02S,BG19O}
\begin{align}
\label{eq:meet}
g_i=\min\left\lbrace\sum_{j=1}^{i}u_{j} | \forall\ \boldsymbol{u}\in\mathcal{S}\right\rbrace-
\min\left\lbrace\sum_{j=1}^{i-1}v_{j} | \forall\ \boldsymbol{v}\in\mathcal{S}\right\rbrace\ ,
\end{align}
where $i\in[1,d]$ and $\sum_{j=1}^k u_j=\sum_{j=1}^k v_j=0$ if $k=0$. 

Let $\ket{\psi}=\sum_{i=1}^{d}(\psi)_{i}\ket{i}$ and $\ket{\phi}=\sum_{i=1}^{d}(\phi)_{i}\ket{i}$ be two pure coherent states of which $\abs{(\psi)_1}\geqslant\abs{(\psi)_2}\geqslant\cdots\geqslant\abs{(\psi)_d}$ and $\abs{(\phi)_1}\geqslant\abs{(\phi)_2}\geqslant\cdots\geqslant\abs{(\phi)_d}$, respectively, the coherent vector $\psi$ writes 
\begin{align}
\label{eq:cohVec}
\psi=(\abs{(\psi)_1}^2,\abs{(\psi)_2}^2,\cdots,\abs{(\psi)_d}^2)\ .
\end{align}
We denote the $i$-th component of coherent vector $\psi$ as $\psi_{i}=\abs{(\psi)_i}^2$ hereafter. Accordingly, the coherence transformation between state $\ket{\psi}$ and $\ket{\phi}$ can be deterministically realized under incoherent operations if and only if the following majorization relation holds~\cite{DS15C,ZH17O} 
\begin{equation}
\label{prop:cohe_tran}
\psi\prec\phi\ .
\end{equation}
\noindent Note, the coherence transformation is not achievable when the \cref{prop:cohe_tran} is not satisfied. Very recently, a framework of approximation CT for an initial state $\rho$ and the target state $\ket{\phi}$ was proposed~\cite{LC22A}. For the sake of clarity, we present the pure state to pure state case in the following for illustration \cite{DS15Co, ZH17O,LC22A}.

From the definition of the majorization relation, one can define a set of coherence monotones as follows \cite{DS15Co,ZH17O,LC22A}
\begin{equation}
\label{eq:cohemonot}
C_{l}(\psi)=\sum_{i=l}^{d}\psi_{i}\ , \ \forall\ l\in [1,d]\ .
\end{equation}
The \cref{prop:cohe_tran} can be reexpressed as 
\begin{equation}
\ket{\psi}\stackrel{IO}{\longrightarrow}\ket{\phi}\Longleftrightarrow C_{l}(\psi) \geqslant C_{l}(\phi),\quad \forall\ l\in[1,d]\ .
\end{equation}
Through incoherent operation, the state $\ket{\psi}$ can be deterministically transformed into an intermediate state $\ket{\varphi}$, which maximizes the fidelity with the target state $\ket{\phi}$, i.e.,
\begin{equation}
\label{eq:intfid}
\ket{\varphi^{\mathrm{f}}}=\mathrm{arg}\max\limits_{\ket{\varphi}:\ket{\psi}\rightarrow\ket{\varphi}} \abs{\bra{\varphi}\phi\rangle}^{2}.
\end{equation}

Here, the intermediate state $\ket{\varphi^\mathrm{f}}$ was proved to be identical with the intermediate state of an optimal protocol for PCT, and can be constructed via the following procedure \cite{ZH17O,LC22A,DS15Co}.
First, define $l_{0}=d+1$ and denote
\begin{equation}
\label{eq:optprobtrans}
q_{1}=\frac{C_{l_{1}}(\psi)}{C_{l_{1}}(\phi)}\coloneqq\min_{l\in[1,d]}\frac{C_{l}(\psi)}{C_{l}(\phi)}\ ,
\end{equation}
where $l_1$ symbolizes the smallest integer when $q_1$ is defined.
Note that there exists the case that $q_1=1$ and $l_1=1$ simultaneously is excluded.
Then $q_{j},\ j=2,\cdots,d,$ can be defined by
\begin{equation}
\label{eq:definition-qj}
q_{j}=\frac{C_{l_{j}}(\psi)-C_{l_{j-1}}(\psi)}{C_{l_{j}}(\phi)-C_{l_{j-1}}(\phi)}=\min_{l\in[1,l_{j-1}-1]}\frac{C_{l}(\psi)-C_{l_{j-1}}(\psi)}{C_{l}(\phi)-C_{l_{j-1}}(\phi)}\ .
\end{equation}
Repeating this procedure until the integer $k$ is found so that $l_{k}=1$. It can be verified that $l_{0}>l_{1}>l_{2}>\cdots >l_{k}$ and $0<q_{1}<q_{2}<\cdots<q_{k}$. Hence, one obtains the intermediate state satisfying \cref{eq:definition-qj} by the sequence of ratios $\lbrace q_{j}\rbrace$, with $l_{j}\in\lbrace 1,\cdots,l_{j-1}-1\rbrace$ as the smallest integer when $q_{j}$ is defined, respectively \cite{LC22A}, that is 
\begin{equation}
\label{eq:phifqj}
\ket{\varphi^{\mathrm{f}}}=\sum_{i=1}^{d}(\varphi^{\mathrm{f}})_{i}\ket{i},\ \text{with}\  (\varphi^{\mathrm{f}})_{i}=\sqrt{q_{j}}\,(\phi)_{i}\ ,\ i\in [l_{j},l_{j-1}-1]\ ,\forall\ j\in[1,k]\ .
\end{equation}

From the construction of intermediate state, 
it is easy to find that $\abs{\varphi^\mathrm{f}_i}\geqslant\abs{\varphi^\mathrm{f}_{i+1}}$ and $C_l(\psi)>C_l(\varphi^\mathrm{f})$, i.e., 
\begin{equation}
\psi\prec\varphi^{\mathrm{f}}\ .
\end{equation}
Say, one can transform $\psi$ into $\varphi^{\mathrm{f}}$ through deterministic incoherent operations according to \cref{prop:cohe_tran}. Moreover, it has also been proved that the maximal probability of obtaining the state $\ket{\phi}$ from $\ket{\psi}$, $P_{\max}(\ket{\psi}\rightarrow\ket{\phi})$, is exactly the $q_1$ \cite{DS15Co,ZH17O}.
\begin{align}
\label{prop:maxProb}
P_{\max}(\ket{\psi}\rightarrow\ket{\phi})=\min_{l\in[1,d]}\frac{C_{l}(\psi)}{C_{l}(\phi)}=q_1\ .
\end{align}
Notice that $P_{\max}(\ket{\psi}\rightarrow\ket{\phi})$ is vanishing with the number of nonzero components of $\psi$ being smaller than that of $\phi$. Therefore, we will assume $\psi$ has at least as many nonvanishing components as $\phi$ hereafter.
From \cref{prop:maxProb}, intuitively, the increase of the coherence of target state maybe achieved by the loss of transformation probability, which can be indeed realized according to the following lemma.

\begin{lemma}
\label{thm:maxproschur}
Let $\mathfrak{q}_\mathrm{I}(\phi)$ be the maximal coherence transformation probability function of $\ket{\phi}$ of transforming a prefixed initial state $\ket{\Psi}$ to the target state $\ket{\phi}$, $\mathfrak{q}_\mathrm{T}(\psi)$ be the maximal CT probability function of $\ket{\psi}$ of transforming the initial state $\ket{\psi}$ to a prefixed target state $\ket{\Phi}$ via SIO, i.e.,
\begin{align}
\mathfrak{q}_\mathrm{I}(\phi)\coloneqq\min_{l\in[1,d]}\frac{C_{l}(\Psi)}{C_{l}(\phi)}\ ,\quad \mathfrak{q}_\mathrm{T}(\psi)\coloneqq\min_{l\in[1,d]}\frac{C_{l}(\psi)}{C_{l}(\Phi)}\ ,
\end{align}
 if $\phi\prec\phi'$, the function $\mathfrak{q}_\mathrm{I}(\phi)$ is a Schur-convex function, $\mathfrak{q}_\mathrm{I}(\phi) \leqslant \mathfrak{q}_\mathrm{I}(\phi')$;  if $\psi\prec\psi'$, the function $\mathfrak{q}_\mathrm{T}(\phi)$ is a Schur-concave function, $\mathfrak{q}_\mathrm{T}(\psi) \geqslant \mathfrak{q}_\mathrm{T}(\psi')$\ .
\end{lemma}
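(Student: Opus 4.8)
The plan is to reduce both statements to a single monotonicity property of the coherence monotones $C_l$ under majorization, together with the elementary fact that taking a minimum preserves pointwise inequalities. First I would record the basic observation that, for any probability vector $\boldsymbol{a}\in\mathcal{P}^d$, the monotone $C_l(\boldsymbol{a})=\sum_{i=l}^d a_i^\downarrow$ can be rewritten as $C_l(\boldsymbol{a})=1-\sum_{i=1}^{l-1}a_i^\downarrow$, so that the defining partial-sum inequalities of $\boldsymbol{a}\prec\boldsymbol{b}$ translate directly into $C_l(\boldsymbol{a})\geqslant C_l(\boldsymbol{b})$ for every $l\in[1,d]$. In words, each $C_l$ is Schur-concave, and this is the only structural input the argument needs. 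Since $\mathfrak{q}_\mathrm{I}$ and $\mathfrak{q}_\mathrm{T}$ depend on their arguments only through the sorted quantities $C_l$, both functions are permutation invariant, and it therefore suffices to verify the defining implication of Schur-convexity (resp. Schur-concavity) directly.

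For $\mathfrak{q}_\mathrm{I}$ I would fix the initial vector $\Psi$ and suppose $\phi\prec\phi'$. The observation above gives $C_l(\phi)\geqslant C_l(\phi')$ for every $l$ contributing to the minimum, and since the numerators $C_l(\Psi)$ are fixed and positive, dividing reverses the inequality termwise,
\[
\frac{C_l(\Psi)}{C_l(\phi)}\leqslant\frac{C_l(\Psi)}{C_l(\phi')}\qquad\text{for all }l .
\]
I would then invoke the monotonicity of the minimum: if $f_l\leqslant g_l$ for every $l$, then $\min_l f_l\leqslant\min_l g_l$, which one sees by evaluating $g$ at its own minimizer $l_0$ and bounding $\min_l g_l=g_{l_0}\geqslant f_{l_0}\geqslant\min_l f_l$. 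Applying this with the two sides above as $f_l$ and $g_l$ yields $\mathfrak{q}_\mathrm{I}(\phi)\leqslant\mathfrak{q}_\mathrm{I}(\phi')$, which is precisely Schur-convexity. The argument for $\mathfrak{q}_\mathrm{T}$ is the mirror image: fixing the target $\Phi$ and assuming $\psi\prec\psi'$, Schur-concavity of $C_l$ gives $C_l(\psi)\geqslant C_l(\psi')$, and now the \emph{denominator} $C_l(\Phi)$ is the fixed positive quantity, so $C_l(\psi)/C_l(\Phi)\geqslant C_l(\psi')/C_l(\Phi)$ termwise; the same minimum-monotonicity step then delivers $\mathfrak{q}_\mathrm{T}(\psi)\geqslant\mathfrak{q}_\mathrm{T}(\psi')$, i.e. Schur-concavity.

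I do not expect a serious obstacle, since the whole argument is definitional once the Schur-concavity of $C_l$ is in hand; the only point genuinely requiring care is the well-definedness of the ratios. Because the minimum runs over all $l\in[1,d]$, indices with a vanishing denominator must be treated, and here I would rely on the standing assumption that the initial state carries at least as many nonvanishing components as the target, which guarantees that the minimizing index has positive numerator and denominator so that the relevant ratios are finite and the termwise division is legitimate. A cleaner alternative is to adopt the convention that ratios with zero denominator equal $+\infty$, whence they never attain the minimum and both displayed termwise inequalities, along with the minimum-monotonicity step, carry through verbatim. I would also note in passing that one could instead appeal to the Schur--Ostrowski criterion, but since $\mathfrak{q}_\mathrm{I}$ and $\mathfrak{q}_\mathrm{T}$ are only piecewise smooth (being minima of smooth functions) the direct definitional route is both simpler and free of differentiability caveats.
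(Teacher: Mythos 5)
Your proposal is correct and follows essentially the same route as the paper's own proof: both reduce the claim to the Schur-concavity of the tail sums $C_l$ under majorization (so that $\phi\prec\phi'$ gives $C_l(\phi)\geqslant C_l(\phi')$ termwise), and then conclude via monotonicity of the pointwise minimum, which the paper spells out as the chain $\frac{C_{l_1}(\Psi)}{C_{l_1}(\phi)}\leqslant\frac{C_{l_1'}(\Psi)}{C_{l_1'}(\phi)}\leqslant\frac{C_{l_1'}(\Psi)}{C_{l_1'}(\phi')}$ through the two minimizing indices. Your explicit handling of vanishing denominators is a small point of added care that the paper instead covers by its standing assumption on the number of nonvanishing components.
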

\begin{proof}
For two possible target states $\ket{\phi}$ and $\ket{\phi'}$ satisfying $\phi\prec\phi'$, according to the definition of majorization relation and \cref{eq:cohemonot}, we have
\begin{align}
C_l(\phi)\geqslant C_l(\phi')\ ,\ \forall\ l\in[1,d]\ .
\end{align}
Thus, for a prefixed initial state $\ket{\Psi}$, 
\begin{align}
\frac{C_{l}(\Psi)}{C_{l}(\phi)}\leqslant\frac{C_{l}(\Psi)}{C_{l}(\phi')}\ ,\ \forall\ l\in[1,d]\ .
\end{align}
Let $l_1$ and $l_1'$ be the smallest integers for $\mathfrak{q}_\mathrm{I}(\phi)$ and $\mathfrak{q}_\mathrm{I}(\phi')$, respectively, then
\begin{align}
\frac{C_{l_1}(\Psi)}{C_{l_1}(\phi)}\leqslant\frac{C_{l_1'}(\Psi)}{C_{l_1'}(\phi)}\leqslant\frac{C_{l_1'}(\Psi)}{C_{l_1'}(\phi')}\ .
\end{align}
Hence, 
\begin{align}
\mathfrak{q}_\mathrm{I}(\phi) \leqslant \mathfrak{q}_\mathrm{I}(\phi')\ .
\end{align}
Similarly, one can prove $\mathfrak{q}_\mathrm{T}(\psi)$ is a Schur-concave function.
\end{proof}
%%===========================================================================================

\section{Probabilistic Coherence Transformation Protocols}
\label{sec:protocols}
From preceding section, we know that the coherence transformation cannot always be realized deterministically. However, the transformation might be accomplished probabilistically. In this section, a scheme of PCT is proposed leveraging the majorization lattice \cite{FC02S}, of which there are two key elements, the meet and join operations. 

Given an initial state $\ket{\psi}$ and a target state $\ket{\phi}$, the optimal common resource (OCR) state, $\ket{\varphi^\wedge}$, can be defined by the meet of coherence vectors as \cite{BG19O}
\begin{align}
\varphi^\wedge=\psi\wedge\phi\ ,
\end{align}
and the optimal common product (OCP) state, $\ket{\varphi^\vee}$, satisfies
\begin{align}
\varphi^\vee=\psi\vee\phi\ .
\end{align}
\noindent Note that the OCR and OCP states can be generalized to the multiple states case. The greedy and thrifty protocols for PCT will be constructed based on OCP and OCR states, respectively. In case of two coherence vectors $\psi$ and $\phi$ are incomparable, from \cref{prop:cohe_tran} one cannot transform them deterministically. However, it can be realized by the greedy and thrifty protocols goes as follows.

\subsection{Probabilistic Coherence Transformation Based on OCP State}

In greedy protocol, one can deterministically transform $\ket{\psi}$ into the OCP state $\ket{\varphi^\vee}$ via SIO due to $\psi\prec\varphi^\vee$. Then, the transformation from $\ket{\varphi^\vee}$ to $\ket{\phi}$ can be readily realized probabilistically through an intermediate state $\ket{\varphi^\mathrm{g}}$, as $\varphi^\vee\nprec\phi$. 

From \cref{eq:intfid}, $\ket{\varphi^\mathrm{f}}$ is defined as the intermediate state that maximizes the fidelity with $\ket{\phi}$. Similarly, it can be verified that the OCP state $\ket{\varphi^\vee}$ is the closest state to $\ket{\phi}$ obtained from $\ket{\psi}$, i.e.,
\begin{align}
\ket{\varphi^\vee}=\mathrm{arg}\min_{\ket{\varphi}:\ket{\psi}\rightarrow\ket{\varphi}}\mathrm{d}(\phi,\varphi)\ .
\end{align}
Here, $\mathrm{d}(\psi,\phi)\coloneqq H(\psi)+H(\phi)-2H(\psi\vee\phi)$ with $H(\psi)$ the Shannon entropy of $\psi$ is a distance measure in majorization lattice theory \cite{FC13I}.
Thus, it is well motivated to compare the coherent states $\ket{\varphi^\vee}$ and $\ket{\varphi^{\mathrm{f}}}$.
Of the greedy protocol, we find the majorization relation between $\varphi^\vee$ and $\varphi^f$ through the following lemma.
\begin{lemma}
\label{lem:aprecra}
For any descending ordered vectors $\boldsymbol{a}\in\mathcal{P}^d$ and $\boldsymbol{r}\in\mathbb{R}^d$ satisfying $\sum_{i=1}^d a_i=\sum_{i=1}^d r_i a_i=1$, we have 
\begin{align}
\label{eq:arhadamarda}
\boldsymbol{a}\prec\boldsymbol{r}\ \odot\ \boldsymbol{a}\ .
\end{align}
Here, $\odot$ denotes the Hadamard product.
\end{lemma}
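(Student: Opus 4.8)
The plan is to verify the defining partial-sum inequalities of majorization directly. Since $\boldsymbol{a}$ is already descending ordered we have $\boldsymbol{a}^\downarrow=\boldsymbol{a}$, so $\boldsymbol{a}\prec\boldsymbol{r}\odot\boldsymbol{a}$ amounts to establishing $\sum_{i=1}^{k}a_i\leqslant\sum_{i=1}^{k}(\boldsymbol{r}\odot\boldsymbol{a})^\downarrow_i$ for every $k\in[1,d]$, the case $k=d$ being precisely the hypothesis $\sum_i a_i=\sum_i r_i a_i=1$.

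First I would peel off the rearrangement. For any real vector $\boldsymbol{b}$ the descending rearrangement maximizes each leading partial sum, so in particular $\sum_{i=1}^{k}(\boldsymbol{r}\odot\boldsymbol{a})^\downarrow_i\geqslant\sum_{i=1}^{k}r_i a_i$. Hence it suffices to prove the stronger claim in the original index order, namely $\sum_{i=1}^{k}r_i a_i\geqslant\sum_{i=1}^{k}a_i$, equivalently $C_k\coloneqq\sum_{i=1}^{k}(r_i-1)a_i\geqslant0$ for all $k$.

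The heart of the argument is to exploit the monotonicity of $\boldsymbol{r}$. Writing $c_i=(r_i-1)a_i$, the normalization gives $C_d=\sum_{i=1}^{d}(r_i-1)a_i=0$. Because $\boldsymbol{r}$ is descending ordered, $r_i-1$ is non-increasing, hence nonnegative on an initial block of indices and nonpositive afterward; since $a_i\geqslant0$, the same sign pattern passes to $c_i$. Consequently $C_k$ is non-decreasing while the nonnegative $c_i$ are added and non-increasing once the nonpositive ones are reached, i.e. $C_k$ is unimodal with $C_0=C_d=0$, which forces $C_k\geqslant0$ throughout. Chaining the inequalities then yields $\sum_{i=1}^{k}a_i\leqslant\sum_{i=1}^{k}r_i a_i\leqslant\sum_{i=1}^{k}(\boldsymbol{r}\odot\boldsymbol{a})^\downarrow_i$, which together with the equality at $k=d$ is exactly $\boldsymbol{a}\prec\boldsymbol{r}\odot\boldsymbol{a}$.

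I expect the sign-change/unimodality step to be the crux, as it is precisely where the descending ordering of $\boldsymbol{r}$ (not merely of $\boldsymbol{a}$) is indispensable: without it one could make $\boldsymbol{r}\odot\boldsymbol{a}$ more uniform than $\boldsymbol{a}$ and break the conclusion. A minor point to dispatch cleanly is the degenerate situation where some $a_i$ vanish or $r_i-1$ never changes sign; the unimodal partial-sum argument subsumes these uniformly, so no separate casework is needed. The remaining ingredients, the reduction of majorization to leading partial sums and the rearrangement inequality, are standard and can be invoked rather than recomputed.
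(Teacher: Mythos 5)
Your proof is correct and follows essentially the same route as the paper's: both reduce the claim to the partial-sum inequalities $\sum_{i=1}^{k}(r_i-1)a_i\geqslant 0$ and obtain them from the fact that, because $\boldsymbol{r}$ is descending ordered and $\sum_{i=1}^{d}(r_i-1)a_i=0$, the terms $(r_i-1)a_i$ are nonnegative on an initial block and nonpositive thereafter. Your unimodal-partial-sum phrasing cleanly replaces the paper's recursive case analysis over the location of the sign change, and your explicit rearrangement step $\sum_{i=1}^{k}(\boldsymbol{r}\odot\boldsymbol{a})^{\downarrow}_i\geqslant\sum_{i=1}^{k}r_i a_i$ makes rigorous a point the paper leaves implicit, namely that $\boldsymbol{r}\odot\boldsymbol{a}$ is tacitly assumed to be already descending.
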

\begin{proof}
If $\boldsymbol{r}=(1,1,\cdots,1)$, \cref{eq:arhadamarda} obviously holds. For  $\boldsymbol{r}\neq(1,1,\cdots,1)$, from the condition $\sum_{i=1}^d a_i=\sum_{i=1}^d r_i a_i=1$, we have $r_1>1$ and $r_d<1$. Because, 
\begin{align}
a_1+a_2+\cdots+a_d=1\ ,\\
r_1 a_1+r_2 a_2+\cdots+r_d a_d=1\ ,
\end{align}
by subtraction, we readily obtain
\begin{align}
\label{eq:rarsubtrt}
(1-r_1)a_1=(r_2-1)a_2+\cdots+(r_d-1)a_d<0\ .
\end{align}
Clearly, $a_1<r_1 a_1$. Furthermore, provided $r_i\leqslant1$ with $i=2,\cdots,d$, by moving the terms of the right hand side of \cref{eq:rarsubtrt} to the left recursively, one can conclude that 
\begin{align}
\label{eq:rarprec}
L_j(\boldsymbol{a})\leqslant L_j(\boldsymbol{r}\odot\boldsymbol{a}),\quad j=1,\cdots,d\ .
\end{align}
Here, $L_j(\boldsymbol{a})=\sum_{i=1}^j a_i$. Following the similar line of reasoning, \cref{eq:rarprec} can also be proved for $r_i\geqslant1$ with $i=1,2$; $r_i\geqslant1$ with $i=1,2,3$, and so on till $r_i\geqslant1$ with $i=1,2,\cdots,d-1$. Conclusively, we then have $\boldsymbol{a}\prec\boldsymbol{r}\ \odot\ \boldsymbol{a}\ $. 
\end{proof}

\begin{theorem}
\label{thm:optMsup}
Given initial state $\ket{\psi}$ and target state $\ket{\phi}$ of the CT under IO or SIO, the following majorization relation holds
\begin{align}
\label{eq:optMsup}
\varphi^{\vee}\prec\varphi^{\mathrm{f}}\ .
\end{align}
Here, $\varphi^{\vee}$ is OCP state and $\varphi^{\mathrm{f}}$ is defined in \cref{eq:intfid}. 
\end{theorem}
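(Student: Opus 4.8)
The plan is to exploit the defining property of the join as the \emph{least} upper bound in the majorization lattice. Since $\varphi^\vee=\psi\vee\phi$ is by definition the least upper bound of $\psi$ and $\phi$, it suffices to exhibit $\varphi^{\mathrm{f}}$ as a \emph{common} upper bound of $\psi$ and $\phi$, i.e.\ to establish both $\psi\prec\varphi^{\mathrm{f}}$ and $\phi\prec\varphi^{\mathrm{f}}$; the least-upper-bound property then yields $\varphi^\vee\prec\varphi^{\mathrm{f}}$ at once. The relation $\psi\prec\varphi^{\mathrm{f}}$ is already available from the construction of the intermediate state recorded just above the theorem, so the real work is to prove $\phi\prec\varphi^{\mathrm{f}}$.

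To obtain $\phi\prec\varphi^{\mathrm{f}}$ I would read off the block structure of $\varphi^{\mathrm{f}}$ from \cref{eq:phifqj}: on each block $i\in[l_j,l_{j-1}-1]$ one has $\varphi^{\mathrm{f}}_i=q_j\,\phi_i$. Define $\boldsymbol{r}\in\mathbb{R}^d$ by $r_i=q_j$ for $i\in[l_j,l_{j-1}-1]$, so that $\varphi^{\mathrm{f}}=\boldsymbol{r}\odot\phi$. Because the nesting $l_0>l_1>\cdots>l_k$ orders the blocks so that the block index decreases as $i$ increases, and because $0<q_1<q_2<\cdots<q_k$, the vector $\boldsymbol{r}$ is descending ordered; moreover $\phi$ is descending ordered with $\sum_i\phi_i=1$, while $\sum_i r_i\phi_i=\sum_i\varphi^{\mathrm{f}}_i=1$ since $\varphi^{\mathrm{f}}$ is normalized. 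These are precisely the hypotheses of \cref{lem:aprecra} with $\boldsymbol{a}=\phi$, which then delivers $\phi\prec\boldsymbol{r}\odot\phi=\varphi^{\mathrm{f}}$.

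Combining $\psi\prec\varphi^{\mathrm{f}}$ and $\phi\prec\varphi^{\mathrm{f}}$ with the fact that $\varphi^\vee$ is the least upper bound of $\psi$ and $\phi$ completes the argument. I expect the main obstacle to be the careful verification that $\boldsymbol{r}$ is genuinely descending ordered, which rests on the strict monotonicity $q_1<\cdots<q_k$ together with the block nesting $l_0>\cdots>l_k$; and, as a minor technical point, on handling indices where $\phi_i=0$, for which $\varphi^{\mathrm{f}}_i=0$ as well, so the corresponding $r_i$ can be fixed to preserve monotonicity without affecting either normalization. Once the hypotheses of \cref{lem:aprecra} are confirmed, the lattice step is immediate and requires no further computation.
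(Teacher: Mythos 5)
Your proposal is correct and follows essentially the same route as the paper: reduce to showing that $\varphi^{\mathrm{f}}$ is a common upper bound of $\psi$ and $\phi$ via the least-upper-bound property of the join, take $\psi\prec\varphi^{\mathrm{f}}$ from the construction, and obtain $\phi\prec\varphi^{\mathrm{f}}$ by writing $\varphi^{\mathrm{f}}=\boldsymbol{m}\odot\phi$ and invoking \cref{lem:aprecra}. Your extra care in verifying that the multiplier vector is descending ordered and that both normalization conditions hold is a welcome tightening of the paper's terser argument, but not a different proof.
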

\begin{proof}
According to the construction of $\ket{\varphi^\mathrm{f}}$, $\psi\prec\varphi^\mathrm{f}$. By the definition of the majorization lattice, $\psi^\vee$ is the supremum of $\psi$ and $\phi$, thus, if one can show $\phi\prec\varphi^\mathrm{f}$, then $\varphi^\vee\prec\varphi^\mathrm{f}$. From \cref{eq:phifqj}, we have $ \varphi^{\mathrm{f}}_{i}=q_{j}\phi_{i}\ ,\ i\in [l_{j},l_{j-1}-1]\ ,\forall\ j\in[1,k]$, that is $ \varphi^{\mathrm{f}}=\boldsymbol{m}\odot\phi$ with $\boldsymbol{m}_i=q_j$, $i\in [l_{j},l_{j-1}-1]$. By \cref{lem:aprecra}, we get $\phi\prec\varphi^{\mathrm{f}}$.
\end{proof}

The \cref{thm:optMsup} tells if a CT can be realized  by means of intermediate state $\ket{\varphi^\mathrm{f}}$ under SIO, it can also be obtained via $\ket{\varphi^\vee}$, which implies that the greedy protocol is an optimal one for PCT \cite{LC22A}. Moreover, in the second step of the greedy protocol, i.e., the PCT $\ket{\varphi^\vee}\stackrel{SIO}{\longrightarrow}\ket{\phi}$, the intermediate state $\ket{\varphi^\mathrm{g}}$ is identical to $\ket{\varphi^\mathrm{f}}$, which can be deduced from $\psi\prec\varphi^{\vee}\prec\varphi^{\mathrm{f}}$. Thus, by merging the two deterministic steps $\ket{\psi}\stackrel{IO}{\longrightarrow}\ket{\varphi^\vee}\stackrel{SIO}{\longrightarrow}\ket{\varphi^\mathrm{g}}$ into a single one $\ket{\psi}\stackrel{SIO}{\longrightarrow}\ket{\varphi^\mathrm{g}}$, the greedy protocol reduces to protocol given in Ref. \cite{DS15Co,LC22A} . 

\subsection{Probabilistic Coherence Transformation Based on OCR State}
In thrifty protocol, the coherence transformation precedure proceeds oppsitely. Explicitly, one first transforms $\ket{\psi}$ to the OCR state, $\ket{\varphi^\wedge}$, probabilistically through an intermediate state $\ket{\varphi^\mathrm{t}}$, as $\psi\nprec\varphi^\wedge$. Then, transforms $\ket{\varphi^\wedge}$ to $\ket{\phi}$ deterministically, which is feasible because $\varphi^\wedge\prec\phi$. Notably, the thrifty protocol better preserves average coherence than greedy protocol, which will be proved in the following. 
To proceed, the coherence monotone of OCR state can be determined by following lemma.
\begin{lemma}
\label{lema:infimum-cohe}
The coherence monotone of the OCR state $\ket{\varphi^\wedge}$ for $m$ pure states $\{\ket{\psi^\iota}\}_{\iota=1}^m$ can be determined by
\begin{align}
C_l(\varphi^\wedge)=\max_{\iota\in[1,m]}\{C_l(\psi^\iota)\},\quad \forall\ l\in[1,d]\ .
\end{align}
Here, $C_l(\psi^\iota)$ represents the coherence monotone of the state $\ket{\psi^\iota}$.
\end{lemma}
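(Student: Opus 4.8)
The plan is to move everything to prefix (cumulative) sums, where the meet has a transparent description through \cref{eq:meet}, and then dualize back to the tail sums that define the coherence monotone $C_l$. First I would record the elementary bookkeeping identity. Since each coherent vector is a normalised probability distribution, writing $L_j(\psi)=\sum_{i=1}^{j}\psi_i$ for the prefix sums of the descending-ordered vector (with the convention $L_0=0$), the definition in \cref{eq:cohemonot} gives $C_l(\psi)=1-L_{l-1}(\psi)$ for every $l\in[1,d]$. Hence the lemma is equivalent to the single statement that the meet has prefix sums $L_{l-1}(\varphi^\wedge)=\min_{\iota}L_{l-1}(\psi^\iota)$ for all $l$.

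Next I would read this prefix-sum identity straight off \cref{eq:meet}. Applying that formula to the set $\mathcal{S}=\{\psi^\iota\}_{\iota=1}^m$ and writing $g_i=\varphi^\wedge_i$, one has $g_i=\min_{\iota}L_i(\psi^\iota)-\min_{\iota}L_{i-1}(\psi^\iota)$, and summing telescopically over $i=1,\dots,l-1$ (using the stated boundary convention that the $k=0$ prefix sum vanishes) yields $L_{l-1}(\varphi^\wedge)=\min_{\iota}L_{l-1}(\psi^\iota)$. Combining this with the first identity gives $C_l(\varphi^\wedge)=1-\min_{\iota}L_{l-1}(\psi^\iota)=\max_{\iota}\bigl(1-L_{l-1}(\psi^\iota)\bigr)=\max_{\iota}C_l(\psi^\iota)$, which is exactly the asserted formula; the passage from $1-\min$ to $\max(1-\,\cdot\,)$ is the only algebraic move here.

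The step where \cref{eq:meet} earns its keep, and the one I expect to be the real obstacle, is checking that the sequence it produces is a genuine descending-ordered probability vector, so that the prefix-sum manipulation legitimately describes an element of the majorization lattice rather than a formal artifact. I would justify this by concavity: for each $\iota$ the sequence $j\mapsto L_j(\psi^\iota)$ is concave, since its increments $\psi^\iota_j$ are non-increasing as $\psi^\iota$ is descending ordered, and the pointwise minimum of finitely many concave sequences is again concave. Consequently $g_i=\min_{\iota}L_i-\min_{\iota}L_{i-1}$ is non-increasing and non-negative in $i$, so $\varphi^\wedge\in\mathcal{P}^d$ is sorted with $L_d(\varphi^\wedge)=\min_{\iota}L_d(\psi^\iota)=1$. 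With this well-definedness secured, the telescoping identity is rigorous and the lemma follows; everything apart from the concavity check is routine.
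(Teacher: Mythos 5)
Your proof is correct and follows essentially the same route as the paper: apply \cref{eq:meet}, telescope the prefix sums to get $\sum_{i=1}^{l-1}\varphi^\wedge_i=\min_\iota\sum_{j=1}^{l-1}\psi^\iota_j$, and convert $1-\min$ into $\max(1-\cdot)$. The only addition is your concavity check that the pointwise minimum of the Lorenz curves yields a genuine descending-ordered probability vector; the paper takes this for granted (it is the standard fact underlying \cref{eq:meet}), so your version is simply slightly more self-contained.
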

\begin{proof}
The $l$-th coherence monotone of the OCR state, $\ket{\varphi^\wedge}$, writes $C_l(\varphi^\wedge)=1-\sum_{i=1}^{l-1}\varphi^\wedge_i$. By \cref{eq:meet}, 
\begin{align}
C_l(\varphi^\wedge)
&=1-\sum_{i=1}^{l-1}\left(\min_{\iota\in[1,m]}\left\{\sum_{j=1}^{i}\psi^\iota_j\right\}-\min_{\iota\in[1,m]}\left\{\sum_{j=1}^{i-1}\psi^\iota_j\right\}\right)\\
&=1-\min_{\iota\in[1,m]}\left\{\sum_{i=1}^{l-1}\psi^\iota_j\right\}\ .
\end{align}
Thus, we have
\begin{align}
C_l(\varphi^\wedge)=\max_{\iota\in[1,m]}\left\{\sum_{j=l}^{d}\psi^\iota_j\right\}=\max_{\iota\in[1,m]}\{C_l(\psi^\iota)\}\ .
\end{align}
\end{proof}

Intuitively, the OCR state also has a natural interpretation, in which, for simplicity, we use a distance measure $\mathrm{D}(\psi,\phi)$ \cite{FC13I} that is dual to the $\mathrm{d}(\psi,\phi)$.
\begin{lemma}
\label{thm:closeststate}
Given initial state $\ket{\psi}$ and target state $\ket{\phi}$ of the CT under SIO, the OCR state $\ket{\varphi^\wedge}$ is the closest state to $\ket{\psi}$ among the states that can be transformed to $\ket{\phi}$, that is 
\begin{align}
\ket{\varphi^\wedge}=\mathrm{arg}\min_{\ket{\varphi}:\ket{\varphi}\rightarrow\ket{\phi}}\mathrm{D}(\psi,\varphi)\ .	
\end{align}
Here, $\mathrm{D}(\psi,\varphi)\coloneqq G(\psi)+G(\varphi)-2G(\psi\wedge\varphi)$ with the Gini Index $G(\psi)=\frac{d+1}{d}-\frac{2}{d}\sum_{i=1}^d i\psi_i$, is a distance measure in majorization lattice theory.
\end{lemma}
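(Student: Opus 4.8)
The plan is to reduce the claim to an elementary, term-by-term optimization by first rewriting the distance $\mathrm{D}$ purely in terms of the coherence monotones $C_l$, and then invoking the explicit meet formula of \cref{lema:infimum-cohe}. The feasible set is exactly $\{\varphi:\varphi\prec\phi\}$, since under SIO the transformation $\ket{\varphi}\to\ket{\phi}$ is possible iff $C_l(\varphi)\geqslant C_l(\phi)$ for all $l$ by \cref{prop:cohe_tran} and \cref{eq:cohemonot}. First I would rewrite the Gini index through tail sums: using $\sum_{i=1}^d i\psi_i=\sum_{l=1}^d C_l(\psi)$ one gets $G(\psi)=\frac{d+1}{d}-\frac{2}{d}\sum_{l=1}^d C_l(\psi)$, which simultaneously exhibits $G$ as a Schur-convex function, since each partial sum of the largest components is Schur-convex.

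Substituting this form into $\mathrm{D}(\psi,\varphi)=G(\psi)+G(\varphi)-2G(\psi\wedge\varphi)$ and invoking \cref{lema:infimum-cohe}, namely $C_l(\psi\wedge\varphi)=\max\{C_l(\psi),C_l(\varphi)\}$, the constant terms cancel and each summand collapses via $a+b-2\max\{a,b\}=-\abs{a-b}$, yielding the closed form
\begin{align}
\mathrm{D}(\psi,\varphi)=\frac{2}{d}\sum_{l=1}^{d}\abs{C_l(\psi)-C_l(\varphi)}\ .
\end{align}
Thus minimizing $\mathrm{D}(\psi,\varphi)$ amounts to an $\ell_1$ matching of the coherence profile of $\varphi$ to that of $\psi$ subject to the feasibility constraint $C_l(\varphi)\geqslant C_l(\phi)$.

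With $\mathrm{D}$ in this form the optimization decouples across $l$. For each fixed $l$, minimizing $\abs{C_l(\psi)-C_l(\varphi)}$ over $C_l(\varphi)\geqslant C_l(\phi)$ is immediate: the minimizer is $C_l(\varphi)=\max\{C_l(\psi),C_l(\phi)\}$, giving the bound $\abs{C_l(\psi)-C_l(\varphi)}\geqslant\abs{C_l(\psi)-\max\{C_l(\psi),C_l(\phi)\}}$. Summing over $l$ and recognizing, again by \cref{lema:infimum-cohe}, that $\max\{C_l(\psi),C_l(\phi)\}=C_l(\varphi^\wedge)$, I obtain $\mathrm{D}(\psi,\varphi)\geqslant\mathrm{D}(\psi,\varphi^\wedge)$ for every feasible $\varphi$. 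Since $\varphi^\wedge=\psi\wedge\phi\prec\phi$ is itself feasible and attains this bound, it is the minimizer; tracking equality term by term forces $C_l(\varphi)=C_l(\varphi^\wedge)$ for all $l$, hence $\varphi=\varphi^\wedge$, securing uniqueness of the $\mathrm{arg\,min}$.

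The step that needs the most care is the decoupling, because the numbers $\{C_l(\varphi)\}_l$ are not independent: they must come from a genuine descending coherence vector, equivalently the partial sums $L_k(\varphi)$ must be concave. I would resolve this by observing that the term-by-term bound is a relaxation of the true problem, so it can only lower the minimum; the point is that the relaxed optimum is realized by $\varphi^\wedge$, which is a bona fide coherence vector lying in the feasible set, so the relaxation is tight and no gap occurs. As an independent cross-check one can argue abstractly without the closed form: monotonicity of the meet gives $\psi\wedge\varphi\prec\psi\wedge\phi=\varphi^\wedge$ from $\varphi\prec\phi$, while absorption gives $\psi\wedge\varphi^\wedge=\varphi^\wedge$ and hence $\mathrm{D}(\psi,\varphi^\wedge)=G(\psi)-G(\varphi^\wedge)$; combining the two Schur-convex inequalities $G(\psi\wedge\varphi)\leqslant G(\varphi)$ and $G(\psi\wedge\varphi)\leqslant G(\varphi^\wedge)$ then yields $\mathrm{D}(\psi,\varphi)-\mathrm{D}(\psi,\varphi^\wedge)=G(\varphi)+G(\varphi^\wedge)-2G(\psi\wedge\varphi)\geqslant0$.
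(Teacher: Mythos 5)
Your proposal is correct, and it takes a genuinely different route from the paper. The paper proves \cref{thm:closeststate} by exploiting the decomposition $\mathrm{D}(\psi,\phi)=\mathrm{D}(\psi,\psi\wedge\phi)+\mathrm{D}(\phi,\psi\wedge\phi)$ and running a three-way case analysis on how a feasible $\varphi$ compares with $\varphi^\wedge$ under majorization ($\varphi\prec\varphi^\wedge$, $\varphi^\wedge\prec\varphi$, or incomparable). You instead convert the Gini index into tail sums via $\sum_i i\psi_i=\sum_l C_l(\psi)$ and use \cref{lema:infimum-cohe} to collapse $\mathrm{D}$ into the closed form $\frac{2}{d}\sum_l\abs{C_l(\psi)-C_l(\varphi)}$, after which the minimization over $\{\varphi:\varphi\prec\phi\}$ decouples coordinate-wise and is solved by $C_l(\varphi)=\max\{C_l(\psi),C_l(\phi)\}=C_l(\varphi^\wedge)$. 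This buys several things the paper's argument does not make explicit: an interpretable $\ell_1$ formula for the lattice distance in terms of the coherence monotones, a one-line feasibility-plus-tightness argument with no case split, and uniqueness of the minimizer (at the level of coherence vectors) by tracking equality term by term. You are also right to flag, and correctly dispatch, the only delicate point—that the coordinate-wise optimization is a relaxation of the true problem over genuine descending probability vectors, which is tight precisely because the relaxed optimum $\varphi^\wedge$ is itself a valid feasible coherence vector. Your closing cross-check, which writes $\mathrm{D}(\psi,\varphi)-\mathrm{D}(\psi,\varphi^\wedge)=G(\varphi)+G(\varphi^\wedge)-2G(\psi\wedge\varphi)\geqslant 0$ using monotonicity of the meet, absorption, and Schur-convexity of $G$, is in fact closer in spirit to the paper's lattice-theoretic argument but avoids its case analysis entirely; either of your two arguments alone would suffice.
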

\begin{proof}
We notice that
\begin{align}
\label{eq:distancewedge}
\mathrm{D}(\psi,\phi)=\mathrm{D}(\psi,\phi\wedge\psi)+\mathrm{D}(\phi,\phi\wedge\psi)\ ,
\end{align}
it is sufficient to prove the lemma by following three cases:
(1) If $\varphi\prec\varphi^\wedge\prec\phi$,  from \cref{eq:distancewedge}, it is readily to obtain $\mathrm{D}(\psi,\varphi)=\mathrm{D}(\psi,\varphi^\wedge)+\mathrm{D}(\varphi^\wedge,\phi)\geqslant\mathrm{D}(\psi,\varphi^\wedge)$.
(2) For $\varphi^\wedge\prec\varphi\prec\phi$, one can conclude that $\psi\wedge\varphi=\varphi^\wedge$, which also leads to  $\mathrm{D}(\psi,\varphi)=\mathrm{D}(\psi,\varphi^\wedge)+\mathrm{D}(\varphi^\wedge,\phi)\geqslant\mathrm{D}(\psi,\varphi^\wedge)$.
(3) When $\varphi\prec\phi$, $\varphi\nprec\varphi^\wedge$ and $\varphi^\wedge\nprec\varphi$, the relation $\psi\wedge\varphi\prec\varphi^\wedge\prec\psi$ holds, we then have $\mathrm{D}(\psi,\varphi)=\mathrm{D}(\psi,\psi\wedge\varphi)+\mathrm{D}(\psi\wedge\varphi,\varphi)\geqslant \mathrm{D}(\psi,\psi\wedge\varphi)\geqslant \mathrm{D}(\psi,\varphi^\wedge)$.
\end{proof}
\noindent Considering \cref{thm:maxproschur}, the \cref{thm:closeststate} implies that by transforming $\ket{\psi}$ to the closest one among the states that can be transformed to $\ket{\phi}$, one may achieve the maiximal probability of CT through the least coherence gain, as shown in the following lemma.

To construct the intermediate state $\ket{\varphi^\mathrm{t}}$ between $\ket{\psi}$ and $\ket{\varphi^\wedge}$, a sequence of ratios $\{t_j\}$ maybe defined, in which $t_1=\mathfrak{q}_\mathrm{I}(\varphi^\wedge)$ is the maximal CT probability.

\begin{lemma}
\label{thm:probtransmeet}
Let $\ket{\psi}$ and $\ket{\phi}$ be the initial state and target state of CT via SIO, respectively, the maximal probability of CT from $\ket{\psi}$ to $\ket{\varphi^\wedge}$ reaches the maximum of CT from $\ket{\psi}$ and $\ket{\phi}$, that is $t_1=q_1$. Moreover, $t_j=q_j$ holds for all $j\in[1,k]$.
\end{lemma}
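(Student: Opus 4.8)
The plan is to prove both assertions by induction on $j$, establishing simultaneously that the breakpoint indices generated by the defining recursion \cref{eq:optprobtrans,eq:definition-qj} for the transformation $\ket{\psi}\to\ket{\varphi^\wedge}$ coincide with the $l_j$ of $\ket{\psi}\to\ket{\phi}$; once the breakpoints match, the two ratio sequences must agree term by term. The engine throughout is \cref{lema:infimum-cohe}, which lets me replace $C_l(\varphi^\wedge)$ by $\max\{C_l(\psi),C_l(\phi)\}$.

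First I would settle the base case $t_1=q_1$. By definition $t_1=\mathfrak{q}_\mathrm{I}(\varphi^\wedge)=\min_l C_l(\psi)/C_l(\varphi^\wedge)$, and substituting $C_l(\varphi^\wedge)=\max\{C_l(\psi),C_l(\phi)\}$ turns each ratio into $\min\{1,C_l(\psi)/C_l(\phi)\}$. Hence $t_1=\min\{1,q_1\}=q_1$, since $q_1=P_{\max}\le 1$. The same identity shows that the set of minimizing indices is exactly the set achieving $q_1$, so the smallest one is $m_1=l_1$. The crucial structural fact I would then isolate is that $C_{l_j}(\varphi^\wedge)=C_{l_j}(\phi)$ at every breakpoint. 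This follows from two ingredients already available: writing $\varphi^{\mathrm{f}}_i=q_{j'}\phi_i$ on each block $[l_{j'},l_{j'-1}-1]$ and using $q_{j'}\bigl(C_{l_{j'}}(\phi)-C_{l_{j'-1}}(\phi)\bigr)=C_{l_{j'}}(\psi)-C_{l_{j'-1}}(\psi)$ from \cref{eq:optprobtrans,eq:definition-qj}, a telescoping sum gives $C_{l_j}(\varphi^{\mathrm{f}})=C_{l_j}(\psi)$; and the relation $\phi\prec\varphi^{\mathrm{f}}$ shown in the proof of \cref{thm:optMsup} yields $C_{l_j}(\phi)\ge C_{l_j}(\varphi^{\mathrm{f}})=C_{l_j}(\psi)$, so the maximum defining $C_{l_j}(\varphi^\wedge)$ is attained by $C_{l_j}(\phi)$.

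For the inductive step I would assume $m_{j-1}=l_{j-1}$ and $t_{j-1}=q_{j-1}$. Feeding $C_{l_{j-1}}(\varphi^\wedge)=C_{l_{j-1}}(\phi)$ into the minimization for $t_j$ recasts it as $\min_{l\in[1,l_{j-1}-1]}\bigl(C_l(\psi)-C_{l_{j-1}}(\psi)\bigr)/\bigl(C_l(\varphi^\wedge)-C_{l_{j-1}}(\phi)\bigr)$, whose numerator is identical to that of $q_j$ and whose denominator is $\ge$ that of $q_j$ since $C_l(\varphi^\wedge)\ge C_l(\phi)$. Evaluating at $l=l_j$ and using $C_{l_j}(\varphi^\wedge)=C_{l_j}(\phi)$ shows the $t_j$-ratio there equals $q_j$, which already gives $t_j\le q_j$. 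For the reverse inequality I would split on the sign of $C_l(\psi)-C_l(\phi)$: where $C_l(\psi)\le C_l(\phi)$ one has $C_l(\varphi^\wedge)=C_l(\phi)$, so the $t_j$-ratio equals the $q_j$-ratio and is $\ge q_j$ by the minimality defining $q_j$.

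The hard part is the remaining case $C_l(\psi)>C_l(\phi)$, where $C_l(\varphi^\wedge)=C_l(\psi)$ and the ratio becomes $\bigl(C_l(\psi)-C_{l_{j-1}}(\psi)\bigr)/\bigl(C_l(\psi)-C_{l_{j-1}}(\phi)\bigr)$; the crude bound only gives $\ge 1$, which is insufficient because $q_j$ may exceed $1$ for large $j$. I expect to dispatch this using the convexity intrinsic to the construction: the $q_j$ are the increasing slopes $0<q_1<\cdots<q_k$ of the lower convex hull of $\{(C_l(\phi),C_l(\psi))\}\cup\{(0,0)\}$, a hull lying on or below the diagonal chord from $(0,0)$ to $(1,1)$. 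The offending point is displaced onto the diagonal at $(C_l(\psi),C_l(\psi))$, which remains on or above the hull and, because $C_l(\psi)>C_l(\phi)>C_{l_{j-1}}(\phi)$, lies strictly to the right of the vertex $(C_{l_{j-1}}(\phi),C_{l_{j-1}}(\psi))$; convexity then forces the slope from that vertex to be $\ge q_j$. This delivers $t_j\ge q_j$, hence $t_j=q_j$ with minimizer $m_j=l_j$, closing the induction. The only delicate point to verify carefully is that displacing the above-diagonal points does not create a new hull vertex that would lower a slope, i.e.\ that the lower convex hull is genuinely unchanged; this is exactly what the ``on or above the diagonal'' observation guarantees.
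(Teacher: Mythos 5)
Your proof is correct, but the mechanism you use for the decisive step is genuinely different from the paper's. Both arguments start the same way: \cref{lema:infimum-cohe} replaces $C_l(\varphi^\wedge)$ by $\max\{C_l(\psi),C_l(\phi)\}$, the base case is $t_1=\min_l\min\{1,C_l(\psi)/C_l(\phi)\}=q_1$, and the whole problem reduces to showing that indices with $C_l(\psi)>C_l(\phi)$ never win the minimization defining $t_j$. The paper handles this algebraically: writing the offending ratio as $a/b$ with $a=C_l(\psi)-C_{l_{j-1}}(\psi)\geqslant b=C_l(\psi)-C_{l_{j-1}}(\phi)$ and applying the mediant inequality $a/b\geqslant(a+c)/(b+c)$ with $c=1-C_l(\psi)$, it bounds every such candidate below by the $l=1$ candidate, which is common to both minimizations and is $\geqslant q_j$ by definition. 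You instead read the $q_j$ as the successive slopes of the lower convex hull of $\{(C_l(\phi),C_l(\psi))\}\cup\{(0,0)\}$ and note that passing to $\varphi^\wedge$ merely displaces the above-diagonal points onto the diagonal, which cannot alter the hull because a convex function joining $(0,0)$ to $(1,1)$ lies on or below the diagonal. Geometrically these are the same estimate --- the paper compares the slope to the diagonal point $(1,1)$, you compare the slope to the nearer diagonal point $(C_l(\psi),C_l(\psi))$ --- but your route has the advantage of isolating a structural fact the paper uses only implicitly, namely $C_{l_j}(\psi)\leqslant C_{l_j}(\phi)$ at every breakpoint, which you derive cleanly by telescoping \cref{eq:definition-qj} against $\phi\prec\varphi^{\mathrm{f}}$ from \cref{thm:optMsup}; this is exactly what guarantees the hull vertices are untouched by the displacement. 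The paper's version is more elementary and self-contained; yours explains \emph{why} the breakpoint structure survives. Two details you should still write out: the identification of the recursion \cref{eq:optprobtrans,eq:definition-qj} with the convex-hull construction, which you assert as ``intrinsic''; and the claim about the smallest minimizing index, $m_j=l_j$, in the degenerate case where a displaced point lands exactly on a hull edge --- though the paper's argument has the same blind spot there.
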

\begin{proof}
From \cref{eq:optprobtrans}, we have
\begin{align}
t_1=\min_{l\in[1,d]}\frac{C_l(\psi)}{C_l(\varphi^\wedge)}\equiv\frac{C_{l^{\prime}_1}(\psi)}{C_{l^{\prime}_1}(\varphi^\wedge)}\ ,
\end{align}
with $l^{\prime}_1$ being the smallest $l\in[1,d]$ minimizing $\frac{C_l(\psi)}{C_l(\varphi^\wedge)}$\ . 
Since by \cref{lema:infimum-cohe}
\begin{align}
t_1=\min_{l\in[1,d]}\frac{C_l(\psi)}{\max\{C_l(\psi),C_l(\phi)\}}\ ,
\end{align}
for any $l$ satisfying $C_l(\phi)<C_l(\psi)$, we have
\begin{align}
\frac{C_l(\psi)}{C_l(\phi)}>\frac{C_l(\psi)}{\max\{C_l(\psi),C_l(\phi)\}}=1\ .
\end{align}
Therefore, in this case the $q_1$ is not well defined, and hence will not being considered. That is to say the minimization procedure only happens for those $l$ guaranteeing $C_l(\phi)\geqslant C_l(\psi)$, i.e.,
\begin{align}
\label{eq:maxminiwdg}
\frac{C_l(\psi)}{\max\{C_l(\psi),C_l(\phi)\}}=\frac{C_l(\psi)}{C_l(\phi)}\ .
\end{align}
Then, we get
\begin{align}
\label{eq:t1q1l1}
t_1=q_1\ ,\quad l^{\prime}_1=l_1\ .
\end{align}

From \cref{eq:definition-qj,eq:maxminiwdg,eq:t1q1l1},  we have
\begin{align}
t_2=\min_{l\in[1,l_1-1]}\frac{C_l(\psi)-C_{l_1}(\psi)}{\max\{C_{l}(\psi),C_{l}(\phi)\}-C_{l_1}(\phi)}\ .
\end{align}
It follows from the equivalence
\begin{align}
a\geqslant b\ \Longleftrightarrow\ \frac ab\geqslant\frac{a+c}{b+c}\ ,\quad(a,b,c>0)\ ,
\end{align}
that for any $l\in[1,l_1-1]$ giving $C_{l}(\psi)>C_{l}(\phi)$, the following inequality holds
\begin{align}
\frac{C_l(\psi)-C_{l_1}(\psi)}{C_l(\psi)-C_{l_1}(\phi)}>\frac{C_1(\psi)-C_{l_1}(\psi)}{C_1(\psi)-C_{l_1}(\phi)}\ .
\end{align}
Hence, $t_2$ could be obtained as $C_{l}(\psi)\leqslant C_{l}(\phi)$, i.e.,
$t_2=q_2\ ,\ l^{\prime}_2=l_2$.
Similarly, one can prove the remaining equalities for each $j$, i.e.,
\begin{align}
t_j=q_j\ ,\quad l^{\prime}_j=l_j\ ,\quad \forall\ j\in[1,k]\ .
\end{align}
\end{proof}
The \cref{thm:probtransmeet} indicates that the thrifty protocol is also optimal for PCT, by which we can prove that the thrifty protocol preserves more coherence on average than the greedy protocol and the one in Refs. \cite{DS15Co,LC22A}.
\begin{theorem}
\label{thm:residualcoherence}
Given the initial state $\ket{\psi}$ and target state $\ket{\phi}$, let $\ket{\mu}$ and $\ket{\nu}$ represent those states different from  $\ket{\phi}$ and $\ket{\varphi^\wedge}$ after CT via SIO from $\ket{\varphi^\vee}$ and $\ket{\psi}$, respectively. The state $\ket{\nu}$ is more coherent than $\ket{\mu}$, that is
\begin{align}
\nu\prec\mu\ .
\end{align}
\end{theorem}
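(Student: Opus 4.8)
The plan is to make both protocols fully explicit as two--outcome strictly incoherent operations sharing the same success probability, write down the two failure states in closed form, and then reduce the asserted majorization to a single family of scalar inequalities that is settled by the block structure of the optimal ratios.

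First I would fix the common data of the two protocols. By \cref{thm:probtransmeet} the thrifty intermediate state $\ket{\varphi^{\mathrm t}}$ and the greedy intermediate state $\ket{\varphi^{\mathrm g}}=\ket{\varphi^{\mathrm f}}$ are governed by the \emph{same} break indices $l_0=d+1>l_1>\dots>l_k=1$ and the \emph{same} ratios $0<q_1<\dots<q_k$, so that in coherence--vector form $\varphi^{\mathrm f}_i=q_j\,\phi_i$ and $\varphi^{\mathrm t}_i=q_j\,\varphi^\wedge_i$ for $i\in[l_j,l_{j-1}-1]$ (cf. \cref{eq:phifqj}). The optimal probabilistic step is realised by the two diagonal strictly incoherent Kraus operators $K_1,K_2$ with $(K_1)_{ii}=\sqrt{q_1/q_j}$ and $(K_2)_{ii}=\sqrt{1-q_1/q_j}$ on block $j$: $K_1$ produces the target with probability $q_1$ in both protocols, while $K_2$ produces the failure state. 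A direct evaluation then gives the two failure vectors in the common form
\begin{align}
\mu=\frac{1}{1-q_1}\,\boldsymbol w\odot\phi\ ,\qquad \nu=\frac{1}{1-q_1}\,\boldsymbol w\odot\varphi^\wedge\ ,\qquad w_i=q_j-q_1\quad (i\in[l_j,l_{j-1}-1])\ ,
\end{align}
where the completeness relation $\sum_\mu K_\mu^\dagger K_\mu=\mathbb I$ yields the conservation laws $q_1\phi+(1-q_1)\mu=\varphi^{\mathrm f}$ and $q_1\varphi^\wedge+(1-q_1)\nu=\varphi^{\mathrm t}$ and, in particular, makes both failure events carry the same weight $1-q_1$. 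The point to stress is that $\mu$ and $\nu$ differ only in the target factor $\phi$ versus $\varphi^\wedge$, the weight $\boldsymbol w$ being identical.

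Next I would reduce $\nu\prec\mu$ to coherence--monotone inequalities. Because $\boldsymbol w$ is non--negative, constant on each block and non--increasing in $i$ (the block value $q_{j(i)}$ decreases as $i$ grows), while $\phi$ and $\varphi^\wedge$ are descending, a boundary check shows both $\mu$ and $\nu$ are already descending ordered; hence via \cref{eq:cohemonot} the relation $\nu\prec\mu$ is equivalent to $C_l(\nu)\geqslant C_l(\mu)$ for every $l$, that is
\begin{align}
\sum_{i=l}^{d} w_i\bigl(\varphi^\wedge_i-\phi_i\bigr)\geqslant 0\ ,\qquad \forall\,l\in[1,d]\ .
\end{align}
I would then prove this using the block structure rather than a naive summation by parts. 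Writing $w_i=\sum_{j'=2}^{k}(q_{j'}-q_{j'-1})\,\mathds 1[i<l_{j'-1}]$ as a sum of non--negative increments supported on the head blocks turns the left--hand side into $\sum_{j'}(q_{j'}-q_{j'-1})\sum_{i=l}^{l_{j'-1}-1}(\varphi^\wedge_i-\phi_i)$. The key input is that at every break index $C_{l_j}(\varphi^\wedge)=C_{l_j}(\phi)$: by \cref{lema:infimum-cohe} one has $C_l(\varphi^\wedge)=\max\{C_l(\psi),C_l(\phi)\}$, and, as established in the proof of \cref{thm:probtransmeet}, each minimiser $l_j$ satisfies $C_{l_j}(\phi)\geqslant C_{l_j}(\psi)$. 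Consequently each inner sum collapses to $C_l(\varphi^\wedge)-C_l(\phi)\geqslant 0$, and the whole expression is a non--negative combination of non--negative terms, giving $\nu\prec\mu$.

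I expect the main obstacle to be precisely this last step: a direct Abel summation of $\sum_{i\ge l}w_i(\varphi^\wedge_i-\phi_i)$ against the decreasing weight $\boldsymbol w$ produces terms of the wrong sign, so non--negativity is not a formal consequence of $\varphi^\wedge\prec\phi$ alone; it hinges on the alignment of the jumps of $\boldsymbol w$ with the break indices $l_j$ together with the equality $C_{l_j}(\varphi^\wedge)=C_{l_j}(\phi)$ there. A secondary technical point worth verifying carefully is that the canonical failure states are genuinely descending ordered, since only then does $\nu\prec\mu$ reduce to the coherence--monotone inequalities above; here the weight jump $q_{j+1}>q_j$ at each boundary compensates the drop $\phi_{l_j-1}\geqslant\phi_{l_j}$ (respectively for $\varphi^\wedge$), which is what I would check explicitly.
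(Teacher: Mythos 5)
Your proposal is correct, and its skeleton coincides with the paper's: you realise both protocols by the same block-diagonal two-outcome SIO, obtain the failure states as Hadamard products $\mu\propto\boldsymbol w\odot\phi$ and $\nu\propto\boldsymbol w\odot\varphi^\wedge$ with a common descending non-negative weight $\boldsymbol w$ (exactly the paper's $\boldsymbol n\odot\varphi^{\mathrm f}$ and $\boldsymbol n\odot\varphi^{\mathrm t}$), and lean on \cref{thm:probtransmeet} for the identity of the ratios and break indices. Where you diverge is the final inequality. The paper disposes of it by citing a general fact from majorization theory --- if $\boldsymbol a\prec\boldsymbol b$ and $\boldsymbol r$ is descending with $\sum_i r_i a_i=\sum_i r_i b_i$, then $\boldsymbol r\odot\boldsymbol a\prec\boldsymbol r\odot\boldsymbol b$ --- applied to $\varphi^\wedge\prec\phi$ (in two stages, passing through $\varphi^{\mathrm t}\prec\varphi^{\mathrm f}$). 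You instead verify the tail-sum inequalities by decomposing $\boldsymbol w$ into non-negative block increments and invoking $C_{l_j}(\varphi^\wedge)=C_{l_j}(\phi)$ at the break indices; this is valid and self-contained, and it exposes the block structure nicely, but it uses more input than necessary. In particular, the ``main obstacle'' you flag is not actually one: although Abel summation applied directly to the tails $\sum_{i\geqslant l}w_i(\varphi^\wedge_i-\phi_i)$ does produce wrong-sign terms, your own conservation laws give $\sum_i w_i\varphi^\wedge_i=\sum_i w_i\phi_i=1-q_1$, so each tail inequality is equivalent to the head inequality $\sum_{i<l}w_i(\varphi^\wedge_i-\phi_i)\leqslant 0$, and that follows by Abel summation from $\varphi^\wedge\prec\phi$ together with $\boldsymbol w$ descending and non-negative alone --- no alignment of the jumps of $\boldsymbol w$ with the $l_j$, and no equality of the coherence monotones at the break points, is needed. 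That shortcut is precisely the content of the Marshall--Olkin lemma the paper imports, so your argument is a more laborious but equally sound route to the same conclusion.
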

\begin{proof}
From \cref{eq:phifqj}, the coherence vector of the intermediate state writes 
\begin{align}
\varphi^{\mathrm{f}}_{i}=q_{j}\phi_{i}\ ,\ i\in [l_{j},l_{j-1}-1]\ ,\forall\ j\in[1,k]\ \Longleftrightarrow\ \varphi^\mathrm{f}=\boldsymbol{m}\,\odot\,\phi\ \ .
\end{align}	
Note, $\ket{\varphi^\mathrm{g}}=\ket{\varphi^\mathrm{f}}$ between $\ket{\varphi^\vee}$ and $\ket{\phi}$. Here, $\boldsymbol{m}_i=q_j$, if $i\in[l_j,l_{j-1}-1]$, and $\odot$ denotes the Hadamard product.
From \cref{thm:probtransmeet}, we know that $t_j=q_j\ , l^{\prime}_j=l_j\ ,\forall\ j\in[1,k]$, and similarly the coherence vector of the intermediate state $\ket{\varphi^\mathrm{t}}$ between the $\ket{\psi}$ and $\ket{\varphi^\wedge}$ can be expressed as 
\begin{align}
\varphi^\mathrm{t}=\boldsymbol{m}\,\odot\,\varphi^\wedge\ .
\end{align}
By means of the fact that for any $\boldsymbol{a}\prec\boldsymbol{b}\in\mathcal{P}^d$ and descending ordered vectors $\boldsymbol{r}\in\mathbb{R}^d$ satisfying $\sum_{i=1}^{d}r_i a_i=\sum_{i=1}^{d}r_i b_i=1$, the relation $\boldsymbol{r}\,\odot\,\boldsymbol{a}\prec\boldsymbol{r}\,\odot\,\boldsymbol{b}$ holds \cite{MA11I}. Thus, $\varphi^\wedge\prec\phi$ leads to
\begin{align}
\label{eq:tImPrecgIm}
\varphi^\mathrm{t}\prec\varphi^\mathrm{f}\ .
\end{align}
\cref{eq:tImPrecgIm} shows that the intermediate state $\ket{\varphi^\mathrm{t}}$ of the thrifty protocol is more coherent than the intermediate state $\ket{\varphi^\mathrm{f}}$. 

Notice that the CT with maximal probability from intermediate state to target state can be performed by the following SIO \cite{VG99E,DS15Co}
\begin{align}
\label{eq:sio}
M=
\begin{pmatrix}
M_k&&\\
&\ddots&\\
&&M_1
\end{pmatrix}.
\end{align}
Here, $M_j=\sqrt{\frac{q_1}{q_j}}\mathbb{I}_{[l_{j-1}-l_j]}\ ,\ \forall\ j\in[1,k]$ with $\mathbb{I}_{[l_{j-1}-l_j]}$ the identity operator in $(l_{j-1}-l_j)$-dimensional Hilbert space. $M$ defines the operation satisfying that
$M\ket{\varphi^\mathrm{f}}=\sqrt{q_1}\ket{\phi}$ and $M\ket{\varphi^\mathrm{t}}=\sqrt{q_1}\ket{\varphi^\wedge}$\ .
When coherence transformation fails, the states $\ket{\mu}$ and $\ket{\nu}$ are yielded from the operation $N$ defined by $N^\dagger N=\mathbb{I}_{[d]}-M^\dagger M$, i.e.,
\begin{align}
\sqrt{1-q_1}\ket{\mu}=N\ket{\varphi^\mathrm{f}},\quad \sqrt{1-q_1}\ket{\nu}=N\ket{\varphi^\mathrm{t}}.
\end{align}
Thus, the coherence vectors of  residual states can be repspectively written as 
\begin{align}
\nu=\boldsymbol{n}\,\odot\,\varphi^\mathrm{t}\ ,\quad \mu=\boldsymbol{n}\,\odot\,\varphi^\mathrm{f}\ ,
\end{align}
where $\boldsymbol{n}_i=\frac{\bra{i}(\mathbb{I}_{[d]}-M^\dagger M)\ket{i}}{(1-q_1)},\ \forall\ i\in[1,d]$\ .
Similarly, one can obtain 
\begin{align}
\nu\prec\mu\ .
\end{align}
\end{proof}
%%%==========================================================================
\paragraph{Example} 
To illustrate \cref{thm:residualcoherence}, let us consider a qutrit system where the initial state $\ket{\psi}$ and the target state $\ket{\phi}$ are given as:
\begin{align}
\ket{\psi}=\sqrt{0.5}\ket{1}+\sqrt{0.4}\ket{2}+\sqrt{0.1}\ket{3}\ ,\quad \ket{\phi}=\sqrt{0.7}\ket{1}+\sqrt{0.15}\ket{2}+\sqrt{0.15}\ket{3}\ .
\end{align}
According to \cref{eq:definition-qj,prop:maxProb}, one can calculate that $q_1=2/3$, $q_2=18/17$, and the optimal probability of the PCT from $\ket{\psi}$ to $\ket{\phi}$ is $q_1=2/3$. Thus, by \cref{eq:sio}, the coherence transformations with maximal probability from intermediate state to target state can be performed by the following SIO
\begin{align}
M=
\begin{pmatrix}
\sqrt{\frac{q_1}{q_2}}&0&0\\
0&\sqrt{\frac{q_1}{q_2}}&0\\
0&0&1
\end{pmatrix}.
\end{align}
When coherence transformations fail, that is, one performs the operation $N$ defined by $N^\dagger N=\mathbb{I}_{[3]}-M^\dagger M$, the unnormalized residual states of greedy and thrifty protocols are, respectively,
\begin{align}
\ket{\mu}=\sqrt{0.7}\ket{1}+\sqrt{0.15}\ket{2}\ ,\quad \ket{\nu}=\sqrt{0.5}\ket{1}+\sqrt{0.35}\ket{2}\ .
\end{align}
It can be easily verified that $\nu\prec\mu$, which illustrates \cref{thm:residualcoherence}.
%========================================================================
\section{Probabilistic Coherence Transformation of Mixed States}
\label{sec:protranmix}
In preceding section, the probabilistic coherence transformation between pure states is investigated. Now we extend the procedure to the study of pure state ensembles or mixed states.

\subsection{Probabilistic Coherence Transformation to Pure State Ensemble}
Provided there are $m$ possible target states $\{\ket{\phi^\iota}\}_{\iota=1}^m$, each of them may appear with certain possibility after the CT from a pure initial state. Define the OCR state $\ket{\varphi^\wedge}$ configured by the $m+1$ coherence vectors as 
\begin{align}
\varphi_{\mathrm{e}}^{\wedge}\coloneqq\psi\wedge\phi^1\wedge\cdots\wedge\phi^m\ ,
\end{align}
then one can prove that the maximal CT probability from $\ket{\psi}$ to $\ket{\varphi_{\mathrm{e}}^{\wedge}}$ is upper bounded by the maximal probabilities of the $m$ possible individual transformations from $\ket{\psi}$ to $\ket{\phi^\iota}$.
\begin{corollary}
\label{thm:thrmulti}
Given an initial state $\ket{\psi}$ and a target state ensemble $\{\ket{\phi^\iota}\}_{\iota=1}^m$\ , the maximal probability of coherence transformation from $\ket{\psi}$ to the OCR state $\varphi_{\mathrm{e}}^{\wedge}$ via SIO cannot exceed the maximal probabilities of the $m$ possible individual CTs from $\ket{\psi}$ to $\ket{\phi^\iota}$, that is
\begin{align}
\mathfrak{q}_{\mathrm{I}}(\varphi_{\mathrm{e}}^{\wedge})\leqslant\mathfrak{q}_{\mathrm{I}}(\phi^\iota)\ ,\quad\forall\ \iota\in[1,m]\ .
\end{align}
The inequality will be saturated by the minimal $\mathfrak{q}_{\mathrm{I}}(\phi^\iota)$ among $m$ transformations from $\ket{\psi}$ to $\ket{\phi^\iota}$.
\end{corollary}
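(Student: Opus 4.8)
The plan is to reduce everything to the coherence monotones of the meet, which \cref{lema:infimum-cohe} already characterizes explicitly, and then to read off both the inequality and its saturation from a pointwise comparison of ratios. Throughout, the prefixed initial state entering $\mathfrak{q}_{\mathrm{I}}$ is taken to be $\ket{\psi}$, so that $\mathfrak{q}_{\mathrm{I}}(\phi)=\min_{l\in[1,d]}C_l(\psi)/C_l(\phi)$.

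First I would invoke \cref{lema:infimum-cohe} with the $m+1$ states $\psi,\phi^1,\dots,\phi^m$ to write
\begin{align}
C_l(\varphi_{\mathrm{e}}^{\wedge})=\max\{C_l(\psi),C_l(\phi^1),\dots,C_l(\phi^m)\},\quad\forall\ l\in[1,d]\ .
\end{align}
In particular $C_l(\varphi_{\mathrm{e}}^{\wedge})\geqslant C_l(\phi^\iota)$ for every $l$ and every $\iota$, which is just the statement that the meet is a lower bound, $\varphi_{\mathrm{e}}^{\wedge}\prec\phi^\iota$. The inequality $\mathfrak{q}_{\mathrm{I}}(\varphi_{\mathrm{e}}^{\wedge})\leqslant\mathfrak{q}_{\mathrm{I}}(\phi^\iota)$ then follows at once from the Schur-convexity of $\mathfrak{q}_{\mathrm{I}}$ established in \cref{thm:maxproschur}; equivalently, dividing the common numerator $C_l(\psi)$ by the larger denominator gives $C_l(\psi)/C_l(\varphi_{\mathrm{e}}^{\wedge})\leqslant C_l(\psi)/C_l(\phi^\iota)$ for each $l$, and since pointwise domination is preserved under taking the minimum over $l$, the claim follows.

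For the saturation I would argue in the opposite direction. Let $l^\ast$ be a smallest index attaining $\mathfrak{q}_{\mathrm{I}}(\varphi_{\mathrm{e}}^{\wedge})$ and let $\iota^\ast$ be an index realizing the maximum in the denominator at $l^\ast$, i.e.\ $C_{l^\ast}(\varphi_{\mathrm{e}}^{\wedge})=C_{l^\ast}(\phi^{\iota^\ast})$. Then
\begin{align}
\mathfrak{q}_{\mathrm{I}}(\varphi_{\mathrm{e}}^{\wedge})=\frac{C_{l^\ast}(\psi)}{C_{l^\ast}(\phi^{\iota^\ast})}\geqslant\min_{l\in[1,d]}\frac{C_l(\psi)}{C_l(\phi^{\iota^\ast})}=\mathfrak{q}_{\mathrm{I}}(\phi^{\iota^\ast})\geqslant\min_{\iota\in[1,m]}\mathfrak{q}_{\mathrm{I}}(\phi^\iota)\ .
\end{align}
Combined with the inequality direction applied to the minimizing index, this pins down $\mathfrak{q}_{\mathrm{I}}(\varphi_{\mathrm{e}}^{\wedge})=\min_{\iota}\mathfrak{q}_{\mathrm{I}}(\phi^\iota)$, which is precisely the asserted saturation by the minimal individual probability.

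The main obstacle I anticipate is not the algebra but the bookkeeping of the admissible index set: the definition of $\mathfrak{q}_{\mathrm{I}}$ through $q_1$ in \cref{eq:optprobtrans} restricts the minimization to those $l$ for which the ratio is genuinely below one, excluding the degenerate case $q_1=l_1=1$. I would therefore verify, exactly as in the proof of \cref{thm:probtransmeet}, that the values of $l$ at which the maximum in $C_l(\varphi_{\mathrm{e}}^{\wedge})$ is realized by $C_l(\psi)$ itself contribute ratio $1$ and hence drop out of the minimization, so that the effective index sets for $\varphi_{\mathrm{e}}^{\wedge}$ and for the individual $\phi^\iota$ coincide and the selection of $\iota^\ast$ above is legitimate. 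Once this matching of index sets is settled, both halves of the statement are immediate.
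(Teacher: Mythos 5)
Your proof is correct and follows essentially the same route as the paper's: both rest on \cref{lema:infimum-cohe} to express $C_l(\varphi_{\mathrm{e}}^{\wedge})$ as a maximum, derive the inequality from the enlarged denominator (the paper routes this through the pairwise OCR states $\varphi^{\wedge,\iota}$ and \cref{thm:probtransmeet}, whereas you use $\varphi_{\mathrm{e}}^{\wedge}\prec\phi^\iota$ directly with \cref{thm:maxproschur} --- a minor streamlining), and obtain saturation by noting that indices where $C_l(\psi)$ itself realizes the maximum contribute trivial ratios and drop out, reducing the minimization to $\min_{l,\iota}C_l(\psi)/C_l(\phi^\iota)=\min_\iota\mathfrak{q}_{\mathrm{I}}(\phi^\iota)$. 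Your explicit two-sided squeeze via $l^\ast$ and $\iota^\ast$ is just a cleaner bookkeeping of the same exchange of minima that the paper performs.
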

\begin{proof}
According to the property of majorzation lattice, $\varphi_{\mathrm{e}}^{\wedge}\prec\varphi^{\wedge,\iota}$, $\forall\ \iota\in[1,m]$ with $\varphi^{\wedge,\iota}$ the coherence vector of the OCR state between $\ket{\psi}$ and $\ket{\phi^\iota}$, from \cref{thm:maxproschur,thm:probtransmeet}, we get
\begin{align}
\mathfrak{q}_{\mathrm{I}}(\varphi_{\mathrm{e}}^{\wedge})\leqslant\mathfrak{q}_{\mathrm{I}}(\varphi^{\wedge,\iota})=\mathfrak{q}_{\mathrm{I}}(\phi^\iota)\ ,\quad\forall\ \iota\in[1,m]\ .
\end{align}
According to \cref{lema:infimum-cohe}, the coherence monotone of $\ket{\varphi^\wedge_\mathrm{e}}$ can be determined as
\begin{align}
\label{eq:chrmntnesb}
\mathfrak{q}_{\mathrm{I}}(\varphi_{\mathrm{e}}^{\wedge})=\min_{l\in[1,d]}\frac{C_l(\psi)}{\max\{C_l(\psi),C_l(\phi^1),\cdots,C_l(\phi^\iota),\cdots,C_l(\phi^m)\}}\ .
\end{align}
The right hand side of \cref{eq:chrmntnesb} reaches minimum as $C_l(\psi)>C_l(\phi^\iota)$, $\forall\ \iota$, it leads to a deterministic CT. Thus, for probabilistic coherence transformation, it is sufficient to consider
\begin{align}
\mathfrak{q}_{\mathrm{I}}(\varphi_{\mathrm{e}}^{\wedge})=\min_{l\in[1,d],\iota\in[1,m]}\ \frac{C_l(\psi)}{C_l(\phi^\iota)} \ .
\end{align}
Conclusively, we have
\begin{align}
\mathfrak{q}_{\mathrm{I}}(\varphi_{\mathrm{e}}^{\wedge})=\min_{\iota\in[1,m]}\mathfrak{q}_{\mathrm{I}}(\phi^\iota)\ .
\end{align}
\end{proof}
\noindent The \cref{thm:thrmulti} shows that the maximal transformation probability from $\ket{\psi}$ to $\ket{\varphi_{\mathrm{e}}^\wedge}$ is less than the least maximal transformation probability among $m$ possible transformations from $\ket{\psi}$ to $\ket{\phi^\iota}$. Note that an ensemble $\{p_j,\ket{\phi_j}\}_{j=1}^m$ is obtainable from $\ket{\psi}$ via IO iff $\psi\prec\sum_{j=1}^m p_j \phi_j$ \cite{QX15C,ZH17O}, we know $\ket{\phi_j}$ is generated with probability $p_j$. In contrast, if the OCR state is yielded with a fixed probability, any target state $\ket{\phi_j}$ selected at will can be deterministically obtained by thrifty protocol.

\subsection{Coherence Transformation between Mixed States}
In this subsection, we study the deterministic and probabilistic coherence transformations between two arbitrary mixed states, which are complicated and remain unsolved generally. 

Notice that the deterministic CT between mixed states $\rho$ and $\sigma$ maybe realized via SIO through pure intermediate states, we have

\begin{lemma}
\label{lem:sufcon}
Given any two mixed states $\rho$ and $\sigma$, $\rho$ can be deterministically transformed into $\sigma$ if there exists an orthogonal complete set of incoherent projectors $\{P_\alpha\}_{\alpha=1}^r$ satisfying
\begin{align}
\label{eq:sufcon1}
\frac{P_\alpha\rho P_\alpha}{\mathrm{Tr}[P_\alpha\rho P_\alpha]}&=\ket{\psi^\alpha}\bra{\psi^\alpha}\quad \text{and}\quad \psi^\alpha\prec\varphi^\sigma\ ,\ \forall\ \alpha\in[1,r].
\end{align}
Here, $\varphi^\sigma=(\sigma_{11},\cdots,\sigma_{dd})$.
\end{lemma}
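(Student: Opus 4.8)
The plan is to realize the transformation as a single SIO assembled in two layers: an outer block-dephasing that isolates the pure branches $\ket{\psi^\alpha}$, and an inner per-branch map that reconstructs $\sigma$ from each branch. First I would note that the incoherent projectors themselves furnish a SIO: taking $\{P_\alpha\}_{\alpha=1}^r$ as Kraus operators, each $P_\alpha$ is diagonal and hence strictly incoherent (at most one nonzero entry per row and column), while orthogonal completeness gives $\sum_\alpha P_\alpha^\dagger P_\alpha=\sum_\alpha P_\alpha=\mathbb{I}$. Writing $p_\alpha\coloneqq\mathrm{Tr}[P_\alpha\rho P_\alpha]$, the hypothesis yields $P_\alpha\rho P_\alpha=p_\alpha\ket{\psi^\alpha}\bra{\psi^\alpha}$, so block-dephasing collapses $\rho$ into the ensemble $\{p_\alpha,\ket{\psi^\alpha}\}$ of pure states, each obeying $\psi^\alpha\prec\varphi^\sigma$.

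The heart of the argument is a pure-to-mixed step: from any pure state whose coherence vector equals $\varphi^\sigma$ one can reach $\sigma$ deterministically under SIO. I would fix $\ket{\eta}$ with $(\eta)_i=\sqrt{\sigma_{ii}}$ and exhibit an explicit SIO $\Lambda_0$ built from diagonal Kraus operators alone. Choosing any decomposition $\sigma=\sum_\nu\ket{w_\nu}\bra{w_\nu}$ into unnormalized vectors and setting $L_\nu\coloneqq\mathrm{diag}\big((w_\nu)_i/\sqrt{\sigma_{ii}}\big)$, each $L_\nu$ is strictly incoherent and satisfies $L_\nu\ket{\eta}=\ket{w_\nu}$. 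Completeness follows from $\sum_\nu L_\nu^\dagger L_\nu=\mathrm{diag}\big(\sum_\nu|(w_\nu)_i|^2/\sigma_{ii}\big)=\mathbb{I}$, where the diagonal consistency $\sum_\nu|(w_\nu)_i|^2=\sigma_{ii}$ is automatic for any ensemble of $\sigma$ (indices with $\sigma_{ii}=0$ carry no weight and are completed trivially on the kernel). Hence $\Lambda_0(\ket{\eta}\bra{\eta})=\sum_\nu\ket{w_\nu}\bra{w_\nu}=\sigma$. Combined with \cref{prop:cohe_tran}, which lets each $\ket{\psi^\alpha}$ reach $\ket{\eta}$ up to incoherent permutation and phase unitaries precisely because $\psi^\alpha\prec\varphi^\sigma$, this produces a per-branch SIO $\Lambda_\alpha$ with $\Lambda_\alpha(\ket{\psi^\alpha}\bra{\psi^\alpha})=\sigma$.

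Finally I would glue the layers. Denoting the Kraus operators of $\Lambda_\alpha$ by $\{L^{(\alpha)}_\nu\}_\nu$, I set $K^{(\alpha)}_\nu\coloneqq L^{(\alpha)}_\nu P_\alpha$; each is a product of strictly incoherent operators and hence strictly incoherent, and the block orthogonality secures completeness, $\sum_{\alpha,\nu}(K^{(\alpha)}_\nu)^\dagger K^{(\alpha)}_\nu=\sum_\alpha P_\alpha\big(\sum_\nu (L^{(\alpha)}_\nu)^\dagger L^{(\alpha)}_\nu\big)P_\alpha=\sum_\alpha P_\alpha=\mathbb{I}$. The resulting SIO then acts as $\sum_{\alpha,\nu}K^{(\alpha)}_\nu\rho (K^{(\alpha)}_\nu)^\dagger=\sum_\alpha\Lambda_\alpha(P_\alpha\rho P_\alpha)=\sum_\alpha p_\alpha\sigma=\sigma$, which is the claim.

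I expect the main obstacle to be the pure-to-mixed reconstruction, namely generating the off-diagonal coherences of $\sigma$ by a \emph{strictly} incoherent map (not merely an incoherent one) while keeping the Kraus family complete. The resolution is the diagonal-Kraus ensemble construction above: its completeness is guaranteed for free by the diagonal-consistency of every decomposition of $\sigma$ against the matched diagonal of $\ket{\eta}$, so that the correlation structure of $\sigma$ never conflicts with strict incoherence. The remaining steps, verifying that $\{P_\alpha\}$ and the glued operators $\{K^{(\alpha)}_\nu\}$ are valid SIOs, are then routine.
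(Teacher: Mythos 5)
Your proof is correct and follows essentially the same route as the paper's: both pass through the intermediate pure state $\ket{\varphi^\sigma}$ (your $\ket{\eta}$), first obtaining it from $\rho$ via the block-dephasing/majorization step and then reconstructing $\sigma$ from it by a strictly incoherent map. The only difference is that the paper simply cites Ref.~\cite{LC19D} for the mixed-to-pure step and Theorem 3.3 of Ref.~\cite{BZ24S} for the pure-to-mixed step, whereas you make both steps explicit with concrete Kraus operators.
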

\begin{proof}
If the condition \cref{eq:sufcon1} is satisfied, the mixed state $\rho$ can be transformed to $\ket{\varphi^\sigma}$ via SIO \cite{LC19D}, where $\ket{\varphi^\sigma}$ is defined from $\varphi^\sigma=(\sigma_{11},\cdots,\sigma_{dd})$ with $\sigma_{ii}$ the diagonal elements of $\sigma$. Then the coherence transformation from $\ket{\varphi^\sigma}$ to $\sigma$ can be achieved under SIO according to Theorem 3.3 in Ref. \cite{BZ24S}.
\end{proof}

Now we explore the PCT between mixed states by adapting the greedy protocol. As mentioned above, the probabilistic coherence transformation between mixed states maybe converted to the PCT from a set of pure states to a pure target state $\ket{\varphi^\sigma}$, that is 
\begin{align}
\{\ket{\psi^\alpha}\}_{\alpha=1}^r\stackrel{SIO}{\longrightarrow}\ket{\varphi^\sigma}\ ,
\end{align}
of which the OCP state $\ket{\varphi_{\mathrm{m}}^\vee}$ maybe defined as $\varphi_{\mathrm{m}}^\vee=\varphi^\sigma\vee\psi^1\vee\cdots\vee\psi^\alpha\vee\cdots\vee\psi^r$.
Then, from \cref{thm:maxproschur}, given the target state $\ket{\varphi^\sigma}$, we immediately get
\begin{align}
\mathfrak{q}_{\mathrm{T}}(\varphi_{\mathrm{m}}^\vee)\leqslant\mathfrak{q}_{\mathrm{T}}(\varphi^{\vee,\alpha})\ .
\end{align}
Here, $\varphi^{\vee,\alpha}$ denotes the coherence vector of the OCP state between $\ket{\psi^\alpha}$ and $\ket{\varphi^\sigma}$.
Say, the maximal CT probability from $\ket{\varphi_{\mathrm{m}}^\vee}$ to $\ket{\varphi^\sigma}$ is smaller than the least maximal CT probability among the $r$ possible transformations from $\ket{\psi^\alpha}$ to $\ket{\varphi^\sigma}$. 

One would expect that $\mathfrak{q}_{\mathrm{T}}(\varphi_{\mathrm{m}}^\vee)=\min\limits_{\alpha\in[1,r]}\mathfrak{q}_{\mathrm{T}}(\varphi^{\vee,\alpha})$, which, however, may not always hold, since the join of the vectors cannot be simply determined like their meet \cite{FC02S}. Explicitly, $\varphi_{\mathrm{m}}^\vee$ can be written as $\varphi_{\mathrm{m}}^\vee=\bigvee\{\varphi^{\vee,\alpha}\}_{\alpha=1}^r$ according to the property of lattice \cite{DB02I}, for which the coherence monotone $C_l(\varphi_{\mathrm{m}}^\vee)=\min\limits_{\alpha\in[1,r]}\{C_l(\varphi^{\vee,\alpha})\}$ does not hold in general. It is noteworthy that the OCP state $\ket{\varphi_{\mathrm{m}}^\vee}$ acts like a catalyst state. By adding the OCP state $\ket{\varphi_{\mathrm{m}}^\vee}$ to the coherence transformation beforehand, one can first perform the PCT from $\ket{\varphi_{\mathrm{m}}^\vee}$ to $\ket{\phi}$. Subsequently, $\ket{\varphi_{\mathrm{m}}^\vee}$ can be recovered through deterministic CT from $\{\ket{\psi^{\alpha}}\}_{\alpha=1}^r$ to $\ket{\varphi_{\mathrm{m}}^\vee}$, that is $\{\ket{\psi^{\alpha}}\}_{\alpha=1}^r\stackrel{\ket{\varphi_{\mathrm{m}}^\vee}}{\longrightarrow}\ket{\phi^\sigma}$.

%%%=================================================================================================================
\section{Probabilistically Converting Coherence into Entanglement}
\label{sec:coheent}
In this section, as an application, we show that the previous results for PCT can be applied to the probabilistic conversion from coherence into entanglement. 

Coherence and entanglement are both intuitive measures of a system's quantumness, rooted in the superposition principle. 
It was proved that any degree of coherence can be converted into entanglement through incoherent operation \cite{AJ05C,SA15M}. 
Let $\ket{\Phi}=\sum_{j=0}^{d-1}\sqrt{\lambda_j}\ket{j}\otimes\ket{j}$ be a bipartite entangled state with $\lambda_\Phi=(\lambda_1,\cdots,\lambda_d)$ a descending ordered Schimidt vector, then a coherent state $\ket{\psi}$ can be transformed to $\ket{\Phi}$ via IO or SIO iff $\psi\prec\lambda_\Phi$ by the Theorem 5 of \cite{ZH17O}, that is $\ket{\Phi}=\Lambda_\mathrm{i}(\ket{\psi}\otimes\ket{0})$ with $\Lambda_\mathrm{i}$ an incoherent operation. Specifically, when $\psi\prec\lambda_\Phi$, the conversion can be accomplished through the coherence transformation from $\ket{\psi}$ to $\ket{\phi}=\sum_{j=0}^{d-1}\sqrt{\lambda_j}\ket{j}$ via IO followed by a generalized CNOT gate $U_{CNOT}$ on $\ket{\phi}$ with an incoherent ancilla, i.e., $\ket{\Phi}=U_{CNOT}(\ket{\phi}\otimes\ket{0})$. Here, $U_{CNOT}$ is also an incoherent operation \cite{SA15M,ZH17O}. 

The probabilistic conversion from coherence to entanglement can be improved by dint of thrifty protocol of PCT. The optimal common resource state $\ket{\varphi^\wedge_{\mathrm{E}}}$ of both coherence and entanglement maybe defined according to
\begin{align}
\varphi^\wedge_{\mathrm{E}}=\psi\wedge\lambda_\Phi\ .
\end{align}
It was proved that the maximal probability of generating $\ket{\Phi}$ from $\ket{\psi}$ by IO or SIO with the aid of an ancilla writes \cite{ZH17O}
\begin{align}
P_{\max}(\ket{\psi}\rightarrow\ket{\Phi})=\min_{l\in[0,d-1]}\frac{\sum_{l}^{d-1}\psi_l}{\sum_{l}^{d-1}\lambda_l}\ .
\end{align}
Obviously, $P_{\max}(\ket{\psi}\rightarrow\ket{\Phi})=P_{\max}(\ket{\psi}\rightarrow\ket{\phi})$.  According to \cref{thm:probtransmeet}, the maximal CT probability from $\ket{\psi}$ to $\ket{\varphi^\wedge_{\mathrm{E}}}$, a PCT, equals to  $P_{\max}(\ket{\psi}\rightarrow\ket{\phi})$. Thus, the thrifty protocol is optimal for this conversion. Moreover, when the conversion fails, one may get the entangled state
\begin{align}
\ket{\Phi^\nu}=U_{CNOT}(\ket{\nu}\otimes\ket{0})\ .
\end{align}
As indicated by \cref{thm:residualcoherence}, the thrifty protocol preserves more coherence as well as entanglement on average for the conversion from coherence into entanglement.
%%%===================================================================================================
\section{Conclusions}
\label{sec:con}
To summarize, in this work, two optimal protocols, the greedy and thrifty protocols, for PCT are constructed, in which both the OCP and OCR states have natural interpretations on the majorization lattice. When the initial state $\ket{\psi}$ and target state $\ket{\phi}$ satisfy $\psi\prec\phi$, both the two protocols reproduce deterministic CT. Moreover, the thrifty protocol shows superiority in preserving the coherence resource when the transformation falis. Furthermore, the greedy and thrifty protocols can be generalized to the PCTs of pure state ensembles, through which the PCT between mixed states maybe achieved. Specifically, it is proved that the maximal CT probability function of target state's coherence vector for a prefixed initial state, and the function of initial state's coherence vector for a prefixed target state are Schur-convex and Schur-concave functions, respectively. That is to say, the large coherence gain can be realized by the sacrifice of coherence transformation probability, and vice versa. As an application, it is shown that the probabilistic conversion from coherence to entanglement may also benefit from the results of PCT.
%===============================================================================
\section*{Acknowledgements}
\noindent
This work was supported in part by National Natural Science Foundation of China(NSFC) under the Grants 12475087 
and 12235008, and University of Chinese Academy of Sciences.

%\bibliographystyle{ref_style.bst}
%\bibliography{reference.bib}

\begin{thebibliography}{99}
	
	\bibitem{MA10Q}
	M.~A. Nielsen and I.~L. Chuang, Quantum Computation and Quantum Information:
	10th Anniversary Edition (Cambridge University Press, Cambridge, 2010), 10th
	edition.
	
	\bibitem{SP97P}
	P.~W. Shor, Polynomial-Time Algorithms for Prime Factorization and Discrete
	Logarithms on a Quantum Computer, SIAM Journal on Computing \textbf{26},
	1484--1509 (1997).
	
	\bibitem{GL97Q}
	L.~K. Grover, Quantum Mechanics Helps in Searching for a Needle in a Haystack,
	Phys. Rev. Lett. \textbf{79}, 325--328 (1997).
	
	\bibitem{GV04Q}
	V.~Giovannetti, S.~Lloyd, and L.~Maccone, Quantum-Enhanced Measurements:
	Beating the Standard Quantum Limit, Science \textbf{306}, 1330--1336 (2004).
	
	\bibitem{GV11A}
	V.~Giovannetti, S.~Lloyd, and L.~Maccone, Advances in quantum metrology, Nat.
	Photonics \textbf{5}, 222--229 (2011).
	
	\bibitem{PD18C}
	D.~P. Pires, I.~A. Silva, E.~R. deAzevedo, D.~O. Soares-Pinto, and J.~G.
	Filgueiras, Coherence orders, decoherence, and quantum metrology, Phys. Rev.
	A \textbf{98}, 032101 (2018).
	
	\bibitem{XB18S}
	B. Y. Xie, H. F. Wang, H. X. Wang, X. Y. Zhu, J. H. Jiang, M. H. Lu, and Y. F. Chen, Second-order photonic topological insulator with corner states, Phys.
	Rev. B \textbf{98}, 205147 (2018).
	
	\bibitem{ZC19D}
	C.~Zhang, T.~R. Bromley, Y.-F. Huang, H.~Cao, W.-M. Lv, B.-H. Liu, C.-F. Li,
	G.-C. Guo, M.~Cianciaruso, and G.~Adesso, Demonstrating Quantum Coherence and
	Metrology that is Resilient to Transversal Noise, Phys. Rev. Lett.
	\textbf{123}, 180504 (2019).
	
	\bibitem{LR24R}
	R.~Lecamwasam, S.~Assad, J.~J. Hope, P.~K. Lam, J.~Thompson, and M.~Gu,
	Relative Entropy of Coherence Quantifies Performance in Bayesian Metrology,
	PRX Quantum \textbf{5}, 030303 (2024).
	
	\bibitem{GD13C}
	D.~Girolami, T.~Tufarelli, and G.~Adesso, Characterizing Nonclassical
	Correlations via Local Quantum Uncertainty, Phys. Rev. Lett. \textbf{110},
	240402 (2013).
	
	\bibitem{FA14D}
	A.~Farace, A.~De~Pasquale, L.~Rigovacca, and V.~Giovannetti, Discriminating
	strength: a bona fide measure of non-classical correlations, New J. Phys.
	\textbf{16}, 073010 (2014).
	
	\bibitem{SA17C}
	A.~Streltsov, G.~Adesso, and M.~B. Plenio, Colloquium: Quantum coherence as a
	resource, Rev. Mod. Phys. \textbf{89}, 041003 (2017).
	
	\bibitem{TA19O}
	R.~Takagi, B.~Regula, K.~Bu, Z.-W. Liu, and G.~Adesso, Operational Advantage of
	Quantum Resources in Subchannel Discrimination, Phys. Rev. Lett.
	\textbf{122}, 140402 (2019).
	
	\bibitem{MW20T}
	M.~Wilde, 2020 IEEE International Symposium on Information Theory (ISIT) (IEEE
	Press, 2020).
	
	\bibitem{CS22Q}
	S.~Chen, S.~Zhou, A.~Seif, and L.~Jiang, Quantum advantages for Pauli channel
	estimation, Phys. Rev. A \textbf{105}, 032435 (2022).
	
	\bibitem{MJ16C}
	J.~Ma, B.~Yadin, D.~Girolami, V.~Vedral, and M.~Gu, Converting Coherence to
	Quantum Correlations, Phys. Rev. Lett. \textbf{116}, 160407 (2016).
	
	\bibitem{HX16E}
	X.~Hu and H.~Fan, Extracting quantum coherence via steering, Sci. Rep.
	\textbf{6}, 34380 (2016).
	
	\bibitem{HX16Q}
	X.~Hu, A.~Milne, B.~Zhang, and H.~Fan, Quantum coherence of steered states,
	Sci. Rep. \textbf{6}, 19365 (2016).
	
	\bibitem{MD17N}
	D.~Mondal, T.~Pramanik, and A.~K. Pati, Nonlocal advantage of quantum
	coherence, Phys. Rev. A \textbf{95}, 010301 (2017).
	
	\bibitem{GD17W}
	D.~Girolami and B.~Yadin, Witnessing Multipartite Entanglement by Detecting
	Asymmetry, Entropy \textbf{19}, 124 (2017).
	
	\bibitem{DZ19E}
	Z. Y. Ding, H.~Yang, H.~Yuan, D.~Wang, J.~Yang, and L.~Ye, Experimental
	investigation of the nonlocal advantage of quantum coherence, Phys. Rev. A
	\textbf{100}, 022308 (2019).
	
	\bibitem{LJ21C}
	J.~L. Li and C.~F. Qiao, Characterizing quantum nonlocalities per uncertainty
	relation, Quantum Inf. Process. \textbf{20}, 109 (2021).
	
	\bibitem{LK23C}
	K. Y. Lee, J. D. Lin, K.~Lemr, A.~Černoch, A.~Miranowicz, F.~Nori, H. Y. Ku,
	and Y.-N. Chen, Coherence Distillation Unveils Einstein-Podolsky-Rosen
	Steering, arXiv preprint arXiv:2312.01055  (2023).
	
	\bibitem{MC23A}
	M. C. Yang, J. L. Li, and C. F. Qiao, An effective way of characterizing the
	quantum nonlocality, Quantum Inf. Process. \textbf{22}, 242 (2023).
	
	\bibitem{LN13Q}
	N.~Lambert, Y. N. Chen, Y. C. Cheng, C. M. Li, G. Y. Chen, and F.~Nori, Quantum
	biology, Nat. Phys. \textbf{9}, 10--18 (2013).
	
	\bibitem{KG14Q}
	G.~Karpat, B.~Çakmak, and F.~F. Fanchini, Quantum coherence and uncertainty in
	the anisotropic XY chain, Phys. Rev. B \textbf{90}, 104431 (2014).
	
	\bibitem{CB15F}
	B.~Çakmak, G.~Karpat, and F.~F. Fanchini, Factorization and Criticality in the
	Anisotropic XY Chain via Correlations, Entropy \textbf{17}, 790--817 (2015).
	
	\bibitem{MA16Q}
	A.~L. Malvezzi, G.~Karpat, B.~Çakmak, F.~F. Fanchini, T.~Debarba, and R.~O.
	Vianna, Quantum correlations and coherence in spin-1 Heisenberg chains, Phys.
	Rev. B \textbf{93}, 184428 (2016).
	
	\bibitem{CJ16C}
	J.-J. Chen, J.~Cui, Y.-R. Zhang, and H.~Fan, Coherence susceptibility as a
	probe of quantum phase transitions, Phys. Rev. A \textbf{94}, 022112 (2016).
	
	\bibitem{LY16Q}
	Y. C. Li and H. Q. Lin, Quantum coherence and quantum phase transitions, Sci.
	Rep. \textbf{6}, 26365 (2016).
	
	\bibitem{SZ24E}
	Z.~D. Shi, H.~Goldman, Z.~Dong, and T.~Senthil, Excitonic quantum criticality:
	from bilayer graphene to narrow Chern bands, arXiv preprint arXiv:2402.12436
	(2024).
	
	\bibitem{BC14Q}
	C.~H. Bennett and G.~Brassard, Quantum cryptography: Public key distribution
	and coin tossing, Theor Comput Sci \textbf{560}, 7--11 (2014).
	
	\bibitem{AN21Q}
	N.~Anand, G.~Styliaris, M.~Kumari, and P.~Zanardi, Quantum coherence as a
	signature of chaos, Physical Review Research \textbf{3}, 023214 (2021).
	
	\bibitem{EC19Q}
	E.~Chitambar and G.~Gour, Quantum resource theories, Rev. Mod. Phys.
	\textbf{91}, 025001 (2019).
	
	\bibitem{AJ06Q}
	J.~$\mathring{\text{A}}$berg, Quantifying superposition, arXiv preprint
	quant-ph/0612146  (2006).
	
	\bibitem{BT14Q}
	T.~Baumgratz, M.~Cramer, and M.~B. Plenio, Quantifying Coherence, Phys. Rev.
	Lett. \textbf{113}, 140401 (2014).
	
	\bibitem{YY15Q}
	Y.~Yao, X.~Xiao, L.~Ge, and C.~P. Sun, Quantum coherence in multipartite
	systems, Phys. Rev. A \textbf{92}, 022112 (2015).
	
	\bibitem{WA16O}
	A.~Winter and D.~Yang, Operational Resource Theory of Coherence, Phys. Rev.
	Lett. \textbf{116}, 120404 (2016).
	
	\bibitem{YB16Q}
	B.~Yadin, J.~Ma, D.~Girolami, M.~Gu, and V.~Vedral, Quantum Processes Which Do
	Not Use Coherence, Phys. Rev. X \textbf{6}, 041028 (2016).
	
	\bibitem{CE16C}
	E.~Chitambar and G.~Gour, Comparison of incoherent operations and measures of
	coherence, Phys. Rev. A \textbf{94}, 052336 (2016).
	
	\bibitem{CE16Critical}
	E.~Chitambar and G.~Gour, Critical Examination of Incoherent Operations and a
	Physically Consistent Resource Theory of Quantum Coherence, Phys. Rev. Lett.
	\textbf{117}, 030401 (2016).
	
	\bibitem{MI16H}
	I.~Marvian and R.~W. Spekkens, How to quantify coherence: Distinguishing
	speakable and unspeakable notions, Phys. Rev. A \textbf{94}, 052324 (2016).
	
	\bibitem{DV17G}
	J.~I. de~Vicente and A.~Streltsov, Genuine quantum coherence, J. Phys. A: Math.
	Theor. \textbf{50}, 045301 (2017).
	
	\bibitem{DS15C}
	S.~Du, Z.~Bai, and Y.~Guo, Conditions for coherence transformations under
	incoherent operations, Phys. Rev. A \textbf{91}, 052120 (2015).
	
	\bibitem{DS17E}
	S.~Du, Z.~Bai, and Y.~Guo, Erratum: Conditions for coherence transformations
	under incoherent operations [Phys. Rev. A \textbf{91}, 052120 (2015)], Phys.
	Rev. A \textbf{95}, 029901 (2017).
	
	\bibitem{DS15Co}
	S.~Du, Z.~Bai, and X.~Qi, Coherence measures and optimal conversion for
	coherent states, Quantum Inf. Comput. \textbf{15}, 1307--1316 (2015).
	
	\bibitem{ZH17O}
	H.~Zhu, Z.~Ma, Z.~Cao, S.-M. Fei, and V.~Vedral, Operational one-to-one mapping
	between coherence and entanglement measures, Phys. Rev. A \textbf{96}, 032316
	(2017).
	
	\bibitem{MA11I}
	A.~Marshall, Inequalities: Theory of Majorization and Its Applications (2011).
	
	\bibitem{LC19D}
	C.~L. Liu and D.~L. Zhou, Deterministic Coherence Distillation, Phys. Rev.
	Lett. \textbf{123}, 070402 (2019).
	
	\bibitem{DS19C}
	S.~Du, Z.~Bai, and X.~Qi, Coherence manipulation under incoherent operations,
	Phys. Rev. A \textbf{100}, 032313 (2019).
	
	\bibitem{BG21G}
	G.~M. Bosyk, M.~Losada, C.~Massri, H.~Freytes, and G.~Sergioli, Generalized
	coherence vector applied to coherence transformations and quantifiers, Phys.
	Rev. A \textbf{103}, 012403 (2021).
	
	\bibitem{LC22A}
	C.~L. Liu and C.~P. Sun, Approximate distillation of quantum coherence, Phys.
	Rev. Res. \textbf{4}, 023199 (2022).
	
	\bibitem{BZ24S}
	Z.~Bai and S.~Du, State convertibility under genuinely incoherent operations,
	arXiv preprint arXiv:2408.02885  (2024).
	
	\bibitem{LB24C}
	P.~Lipka-Bartosik, H.~Wilming, and N.~Nelly H.~Y, Catalysis in quantum
	information theory, Rev. Mod. Phys. \textbf{96}, 025005 (2024).
	
	\bibitem{ZQ16D}
	Q.~Zheng, J.~Xu, Y.~Yao, and Y.~Li, Detecting macroscopic quantum coherence
	with a cavity optomechanical system, Phys. Rev. A \textbf{94}, 052314 (2016).
	
	\bibitem{RC19B}
	C.~Radhakrishnan, Z.~Ding, F.~Shi, J.~Du, and T.~Byrnes, Basis-independent
	quantum coherence and its distribution, Ann Phys (NY) \textbf{409}, 167906
	(2019).
	
	\bibitem{QW22C}
	W.~Qiu, X.~Cheng, A.~Chen, Y.~Lan, and W.~Nie, Controlling quantum coherence
	and entanglement in cavity magnomechanical systems, Phys. Rev. A
	\textbf{105}, 063718 (2022).
	
	\bibitem{SY23C}
	Y.~Sun, N.~Li, and S.~Luo, Coherence in multipath interference via quantum
	Fisher information, Phys. Rev. A \textbf{108}, 062416 (2023).
	
	\bibitem{PJ24V}
	J. X. Peng, A.~Kundu, Z. X. Liu, A.~u. Rahman, N.~Akhtar, and M.~Asjad, Vector
	photon-magnon-phonon coherence in a polarized microwave driven cavity
	magnomechanical system, Phys. Rev. B \textbf{109}, 064412 (2024).
	
	\bibitem{ZH24N}
	H.~Zhang, X.~Shang, Q.~Liao, A.~Chen, and W.~Nie, Nonreciprocal quantum
	coherence in spinning magnomechanical systems, Phys. Rev. A \textbf{109},
	013719 (2024).
	
	\bibitem{VG99E}
	G.~Vidal, Entanglement of Pure States for a Single Copy, Phys. Rev. Lett.
	\textbf{83}, 1046--1049 (1999).
	
	\bibitem{TG19O}
	G.~Torun, L.~Lami, G.~Adesso, and A.~Yildiz, Optimal distillation of quantum
	coherence with reduced waste of resources, Phys. Rev. A \textbf{99}, 012321
	(2019).
	
	\bibitem{LC21O}
	C.~L. Liu and C.~P. Sun, Optimal probabilistic distillation of quantum
	coherence, Physical Review Research \textbf{3} (2021).
	
	\bibitem{LC20C}
	C.~L. Liu and D.~L. Zhou, Catalyst-assisted probabilistic coherence
	distillation for mixed states, Phys. Rev. A \textbf{101}, 012313 (2020).
	
	\bibitem{FC02S}
	F.~Cicalese; and U.~Vaccaro, Supermodularity and Subadditivity Properties of
	the Entropy on the Majorization Lattice, IEEE Trans. Inf Theory \textbf{48}
	(2002).
	
	\bibitem{DS24P}
	S.~Deside, M.~Arnhem, C.~Griffet, and N.~J. Cerf, Probabilistic pure state
	conversion on the majorization lattice, Physical Review Research \textbf{6},
	023156 (2024).
	
	\bibitem{MA99C}
	M.~A. Nielsen, Conditions for a Class of Entanglement Transformations, Phys.
	Rev. Lett. \textbf{83}, 436--439 (1999).
	
	\bibitem{MA01C}
	M.~A. Nielsen, Characterizing mixing and measurement in quantum mechanics,
	Phys. Rev. A \textbf{63} (2001).
	
	\bibitem{MA01M}
	M.~A. Nielsen and G.~Vidal, Majorization and the interconversion of bipartite
	states, Quantum Inf. Comput. \textbf{1}, 76--93 (2001).
	
	\bibitem{PM11M}
	M.~H. Partovi, Majorization formulation of uncertainty in quantum mechanics,
	Phys. Rev. A \textbf{84} (2011).
	
	\bibitem{FS13U}
	S.~Friedland, V.~Gheorghiu, and G.~Gour, Universal uncertainty relations, Phys.
	Rev. Lett. \textbf{111}, 230401 (2013).
	
	\bibitem{PZ13M}
	Z.~Puchała, {\L}.~Rudnicki, and K.~Życzkowski, Majorization entropic
	uncertainty relations, J. Phys. A: Math. Theor. \textbf{46} (2013).
	
	\bibitem{BG17A}
	G.~M. Bosyk, G.~Sergioli, H.~Freytes, F.~Holik, and G.~Bellomo, Approximate
	transformations of bipartite pure-state entanglement from the majorization
	lattice, Physica A: Statistical Mechanics and its Applications \textbf{473},
	403--411 (2017).
	
	\bibitem{GG18C}
	G.~Gour, A.~Grudka, M.~Horodecki, W.~Kłobus, J.~Łodyga, and V.~Narasimhachar,
	Conditional uncertainty principle, Phys. Rev. A \textbf{97}, 042130 (2018).
	
	\bibitem{LJ19T}
	J.~L. Li and C.~F. Qiao, The Optimal Uncertainty Relation, Ann Phys
	\textbf{531}, 1900143 (2019).
	
	\bibitem{BG19O}
	G.~M. Bosyk, G.~Bellomo, F.~Holik, H.~Freytes, and G.~Sergioli, Optimal common
	resource in majorization-based resource theories, New Journal of Physics
	\textbf{21}, 083028 (2019).
	
	\bibitem{BG19M}
	G.~Bellomo and G.~M. Bosyk, Majorization, across the (Quantum) Universe,
	323--342 (Cambridge University Press, 2019).
	
	\bibitem{LJ20A}
	J.~L. Li and C.~F. Qiao, An Optimal Measurement Strategy to Beat the Quantum
	Uncertainty in Correlated System, Adv. Quantum Technol. \textbf{3}, 2000039
	(2020).
	
	\bibitem{ZG23E}
	G.~Zhu, A.~Liu, L.~Xiao, K.~Wang, D.~Qu, J.~Li, C.~Qiao, and P.~Xue,
	Experimental investigation of conditional majorization uncertainty relations
	in the presence of quantum memory, Phys. Rev. A \textbf{108}, L050202 (2023).
	
	\bibitem{XY23Q}
	Y.~Xiao, Y.~Yang, X.~Wang, Q.~Liu, and M.~Gu, Quantum Uncertainty Principles
	for Measurements with Interventions, Phys. Rev. Lett. \textbf{130}, 240201
	(2023).
	
	\bibitem{LA23A}
	A. X. Liu, M. C. Yang, and C. F. Qiao, A Universal Quantum Certainty Relation
	for Arbitrary Number of Observables, arXiv:2308.05690  (2023).
	
	\bibitem{FC13I}
	F.~Cicalese, L.~Gargano, and U.~Vaccaro, Information theoretic measures of
	distances and their econometric applications, in 2013 IEEE International
	Symposium on Information Theory, 409--413 (2013).
	
	\bibitem{QX15C}
	X.~Qi, Z.~Bai, and S.~Du, Coherence convertibility for mixed states, arXiv
	preprint arXiv:1505.07387  (2015).
	
	\bibitem{DB02I}
	B.~Davey, Introduction to Lattices and Order (Cambridge University Press,
	2002).
	
	\bibitem{AJ05C}
	J.~K. Asbóth, J.~Calsamiglia, and H.~Ritsch, Computable Measure of
	Nonclassicality for Light, Phys. Rev. Lett. \textbf{94}, 173602 (2005).
	
	\bibitem{SA15M}
	A.~Streltsov, U.~Singh, H.~S. Dhar, M.~N. Bera, and G.~Adesso, Measuring
	Quantum Coherence with Entanglement, Phys. Rev. Lett. \textbf{115}, 020403
	(2015).
	
\end{thebibliography}

\end{document}